\newtheorem{propo}{Proposition}
\newtheorem{cor}{Corollary}
\newcommand{\ignore}[1]{ }
\newcounter{rem}
\def\calR{\mathcal{R}}
\def\qed{\hbox{\rlap{$\sqcap$}$\sqcup$}}
\newenvironment{proof}{\par\noindent{\bf Proof:}}{\mbox{}\hfill$\qed$\\}
\begin{document}

\title{Dynamic algorithms for visibility polygons in simple polygons}

\author{
R. Inkulu\inst{1}\thanks{This research is supported in part by NBHM grant 248(17)2014-R\&D-II/1049}
\and 
K. Sowmya\inst{1}
\and 
Nitish P. Thakur\inst{1}
}

\institute{
Department of Computer Science \& Engineering\\
IIT Guwahati, India\\
\email{\{rinkulu,k.sowmya,tnitish\}@iitg.ac.in}
}

\maketitle

\pagenumbering{arabic}
\setcounter{page}{1}

\begin{abstract}
We devise the following dynamic algorithms for both maintaining as well as querying for the visibility and weak visibility polygons amid vertex insertions and/or deletions to the simple polygon.
\begin{itemize}
\vspace{0.05in}
\item A fully-dynamic algorithm for maintaining the visibility polygon of a fixed point located interior to the simple polygon amid vertex insertions and deletions to the simple polygon.
The time complexity to update the visibility polygon of a point $q$ due to the insertion (resp. deletion) of vertex $v$ to (resp. from) the current simple polygon is expressed in terms of the number of combinatorial changes needed to the visibility polygon of $q$ due to the insertion (resp. deletion) of $v$.
\vspace{0.03in}
\item An output-sensitive query algorithm to answer the visibility polygon query corresponding to any point $p$ in $\mathbb{R}^2$ amid vertex insertions and deletions to the simple polygon.
If $p$ is not exterior to the current simple polygon, then the visibility polygon of $p$ is computed.
Otherwise, our algorithm outputs the visibility polygon corresponding to the exterior visibility of $p$.
\vspace{0.03in}
\item An incremental algorithm to maintain the weak visibility polygon of a fixed-line segment located interior to the simple polygon amid vertex insertions to the simple polygon.
The time complexity to update the weak visibility polygon of a line segment $pq$ due to the insertion of vertex $v$ to the current simple polygon is expressed in terms of the sum of the number of combinatorial updates needed to the geodesic shortest path trees rooted at $p$ and $q$ due to the insertion of $v$.
\vspace{0.03in}
\item An output-sensitive algorithm to compute the weak visibility polygon corresponding to any query line segment located interior to the simple polygon amid both the vertex insertions and deletions to the simple polygon.
\vspace{0.1in}
\end{itemize}

Each of these algorithms requires preprocessing the initial simple polygon.
And, the algorithms that maintain the visibility polygon (resp. weak visibility polygon) compute the visibility polygon (resp. weak visibility polygon) with respect to the initial simple polygon during the preprocessing phase.

\end{abstract}

\begin{keywords}
Computational Geometry, Visibility, Dynamic Algorithms
\end{keywords}

\section{Introduction}
\label{sect:intro}

Let $P$ be a simple polygon with $n$ vertices.
Two points $p, q \in P$ are said to be mutually {\it visible} to each other whenever the interior of line segment $pq$ does not intersect any edge of $P$.
For a point $q \in P$, the \emph{visibility polygon} $VP(q)$ of $q$ is the maximal set of points $x \in P$ such that $x$ is visible to $q$. 
The problem of computing the visibility polygon of a point in a simple polygon was first attempted in \cite{journals/cgip/Davis79}, who presented an $O(n^2)$ time algorithm.
Then, ElGindy and Avis \cite{journals/jal/ElGindyA81} and Lee \cite{journals/cvgip/Lee83} presented an $O(n)$ time algorithms for this problem.
Joe and Simpson \cite{journals/bitnummath/joe87} corrected a flaw in \cite{journals/jal/ElGindyA81,journals/cvgip/Lee83} and devised an $O(n)$ time algorithm that correctly handles winding in the simple polygon. 
For a polygon with holes, Suri et~al. \cite{conf/compgeom/SuriO86} devised an $O(n\lg{n})$ time algorithm.
An optimal $O(n+h\lg{h})$ time algorithm was given in Heffernan and Mitchell \cite{journals/siamcomp/HeffernanM95}.
Algorithms for visibility computation amid convex sets were devised in Ghosh \cite{journals/jal/Ghosh91}.
The preprocess-query paradigm based algorithms were studied in \cite{journals/siamcomp/GuibasMR97,journals/comgeo/BoseLM02,journals/dcg/AronovGTZ02,journals/algorithmica/AsanoAGHI86,conf/swat/Vegter90,conf/compgeom/ZareiG05,journals/comgeo/InkuluK09,journals/comgeo/ChenW15a,journals/comgeo/ChenW15}.
Algorithms for computing visibility graphs were given in \cite{journals/siamcomp/GhoshM91}.
For a line segment $pq \in P$, the {\it weak visibility polygon} $WVP(pq)$ is the maximal set of points $x \in P$ such that $x$ is visible from at least one point belonging to line segment $pq$.
Chazelle and Guibas \cite{journals/dcg/ChazelleG89}, and Lee and Lin \cite{journals/cvgip/LeeL86} gave an $O(n\lg{n})$ time algorithms for computing the weak visibility polygon of a line segment located interior to the given simple polygon.
Later, Guibas et~al. \cite{journals/algorithmica/GuibasHLST87} gave an $O(n)$ time algorithm for the same.
The query algorithms for computing weak visibility polygons were devised in \cite{journals/comgeo/BoseLM02,conf/cccg/BygiG11,journals/dcg/AronovGTZ02}.
Ghosh \cite{books/visalgo/skghosh2007} gives a detailed account of visibility related algorithms.
Given a simple polygon $P$ and a point $p$ (resp. line segment) interior to $P$, algorithms devised in \cite{conf/caldam/Inkulu17} maintain the visibility polygon of $p$ (resp. weak visibility polygon of $p$) as vertices are added to $P$. 
To our knowledge, \cite{conf/caldam/Inkulu17} gives the first dynamic (incremental) algorithms in the context of maintaining the visibility and weak visibility polygons in simple polygons.

\subsubsection*{Our contribution}

In the context of computing the visibility polygon (resp. weak visibility polygon), an algorithm is termed {\it fully-dynamic} when the algorithm maintains the visibility polygon (resp. weak visibility polygon) of a fixed point (resp. fixed-line segment) as the simple polygon is updated with vertex insertions and vertex deletions.
When a new vertex is added to the current simple polygon or when a vertex of the current simple polygon is deleted, the visibility polygon (resp. weak visibility polygon) of a fixed point (resp. fixed line segment) is updated.
A dynamic algorithm is said to be {\it incremental} if it updates the visibility polygon (resp. weak visibility polygon) of a fixed point (resp. fixed line segment) amid vertex insertions.
A dynamic algorithm is said to be {\it decremental} if it updates the visibility polygon (resp. weak visibility polygon) of a fixed point (resp. fixed line segment) amid vertex deletions.
We devise the following dynamic algorithms for maintaining as well as querying for the visibility polygons as well as weak visibility polygons.

\begin{itemize}
\item[*]
Our first algorithm (refer Section~\ref{sect:maintvpsimppolyint}) is fully-dynamic, and it maintains the visibility polygon of a fixed point $q$ located interior to the given simple polygon.
We preprocess the initial simple polygon $P$ having $n$ vertices to build data structures of size $O(n)$, and the visibility polygon of $q$ in $P$ is computed using the algorithm from \cite{journals/jal/ElGindyA81,journals/cvgip/Lee83,journals/bitnummath/joe87}.
When a vertex $v$ is added to (resp. deleted from) the current simple polygon $P'$, we update the visibility polygon of $q$ in $O((k+1)(\lg{n'})^2)$ time.
Here, $k$ is the number of updates required to the visibility polygon of $q$ due to the insertion (resp. deletion) of $v$ to (resp. from) $P'$.
(Both the insertion and deletion algorithms take $O((\lg{n'})^2)$ time to update even when $k$ is zero, hence the stated time complexity.)
We are not aware of any work other than \cite{conf/caldam/Inkulu17} to dynamically update the visibility polygon.
The algorithm in \cite{conf/caldam/Inkulu17} is an incremental algorithm to maintain the visibility polygon amid vertex insertions to simple polygon.
After preprocessing $P$ in $O(n)$ time, it takes $O((k+1)\lg{n'})$ time to update the visibility polygon due to the insertion of vertex $v$ to simple polygon, where $k$ is the number of combinatorial changes required to visibility polygon being updated due to the insertion of $v$ and $n'$ is the number of vertices of the current simple polygon. 

\vspace{0.1in}

\item[*]
Our second algorithm (refer Section~\ref{sect:maintvpsimppolyext}) answers the visibility polygon of any query point located in $\mathbb{R}^2$ in $O(k(\lg{n'})^2)$ time, amid vertex insertions and deletions to the simple polygon. 
Here, $k$ is the number of vertices of $VP(q)$, and $n'$ is the number of vertices of the current simple polygon.
This algorithm computes $O(n)$ sized data structures by preprocessing the initial simple polygon $P$ defined with $n$ vertices.

\vspace{0.1in}

\item[*]
Our third algorithm (refer Section~\ref{sect:wvpsimppolyincr}) maintains the weak visibility polygon of a fixed line segment $pq$ located interior to the given simple polygon amid vertex insertions to the simple polygon.
It preprocesses the initial simple polygon $P$ defined with $n$ vertices in $O(n)$ time to build data structures of size $O(n)$.
The preprocessing time includes the time to compute the visibility polygon of $q$ in $P$ using the algorithm from \cite{journals/algorithmica/GuibasHLST87}.
When a vertex $v$ is added to the current simple polygon $P'$, we update the weak visibility polygon of $pq$ in $O((k+1)\lg{n'})$ time.
Here, $k$ is the sum of changes in the number of updates required to shortest path trees rooted at $p$ and $q$ due to the insertion of $v$, and $n'$ is the number of vertices of $P'$.
(Similar to the update complexity of the first algorithm, incremental algorithm devised here takes $O(\lg{n'})$ time to update the weak visibility polygon of $pq$ even when $k$ is zero, hence the stated time complexity.)

\vspace{0.1in}

\item[*]
Our fourth and final algorithm (refer Section~\ref{sect:wvpsimppolyint}) answers the weak visibility polygon query of any line segment located interior to the current simple polygon $P'$ in $O(k(\lg{n'})^2)$ time amid vertex insertions and vertex deletions to simple polygon after preprocessing the initial simple polygon $P$ in $O(n)$ time.
Here, $k$ is the output complexity, $n$ is the number of vertices of $P$, and $n'$ is the number of vertices of $P'$. 
\end{itemize}

To our knowledge, the fully-dynamic algorithm to maintain the visibility polygon of a fixed point located interior to the simple polygon is the first fully-dynamic algorithm devised in this context. 
And, same is the case with the query algorithms devised to maintain the visibility and weak visibility polygons amid vertex insertions and deletions.
The preliminary version of weak visibility polygon maintenance algorithm was published in a conference \cite{conf/caldam/Inkulu17}, which is detailed herewith.
Moreover, we devise an algorithm to compute the visibility polygon of a point located in the given simple polygon using ray-shooting and ray-rotating queries, which could be of independent interest.

We assume that after adding or deleting any vertex of the current simple polygon, the polygon remains simple. 
Moreover, it is assumed that every new vertex is added between two successive vertices of the current simple polygon.
In maintaining the visibility polygon (resp. weak visibility polygon), the point $q$ (resp. line segment $l$) whose visibility polygon (resp. weak visibility polygon) is updated remains interior to the updated simple polygon if it was interior to the simple polygon before the vertex insertion/deletion; and, $q$ (resp. $l$) remains exterior if it was exterior to $P$.
The initial simple polygon is denoted with $P$.
We use $P'$ to denote the simple polygon just before inserting/deleting a vertex and $P''$ is the simple polygon after inserting/deleting a vertex.
Further, we assume that $P, P'$, and $P''$ are respectively defined with $n, n'$, and $n''$ vertices.
Whenever we delete a vertex $v$ of $P'$, which is adjacent to vertices $v_i$ and $v_{i+1}$ in $P'$, it is assumed that an edge is introduced between $v_i$ and $v_{i+1}$ after deleting $v$.
Similarly, whenever we insert a vertex $v$ between adjacent vertices $v_i$ and $v_{i+1}$ of $P'$, it is assumed that two edges are introduced: one between $v_i$ and $v$, and the other between $v_{i+1}$ and $v$. 
The boundary of a simple polygon $P$ is denoted with $bd(P)$.
Unless specified otherwise, the boundary of the simple polygon is assumed to be traversed in the counterclockwise direction.

Let $u_iu_{i+1}$ be an edge on the boundary of $VP(q)$ such that (i) no point of $u_iu_{i+1}$, except the points $u_i$ and $u_{i+1}$, belong to the boundary of $P$, and (ii) one of $u_i$ or $u_{i+1}$ is a vertex of $P$.
Then such an edge $u_iu_{i+1}$ is called a {\it constructed edge}.
For every constructed edge $u_iu_{i+1}$, among $u_i$ and $u_{i+1}$ the farther from $q$ is termed a {\it constructed vertex} of $VP(q)$.
The constructed edges of $VP(q)$ partition $P$ into a set $\calR = \{VP(q), R_1, R_2, \ldots, R_s\}$ of simple polygonal regions such that no point $p$ interior to region $R$ is visible from $q$ for any $R \in \calR$ and $R \ne VP(q)$.
These regions are termed {\it occluded regions} from $q$.
For each $R \in \calR$ and $R \ne VP(q)$, there exists a constructed vertex associated to $R$.
Since each vertex of $P$ may cause at most one constructed edge, there can be $O(n)$ constructed edges.
In the context of weak visibility polygons, both the constructed vertices and constructed edges are defined analogously. 
(Refer to \cite{books/visalgo/skghosh2007}).

Given a simple polygon $P$, the {\it ray-shooting query} of a ray $\overrightarrow{r} \in \mathbb{R}^2$ determines the first point of intersection of $\overrightarrow{r}$ with the $bd(P)$. 
Given two points $p'$ and $p''$ in the interior/exterior of a simple polygon $P$, the {\it shortest-distance query} between $p'$ and $p''$ outputs the geodesic Euclidean distance between $p'$ and $p''$.

An algorithm to compute the visibility polygon of a point interior to a given simple polygon using the ray-shooting and ray-rotating queries is described in Section~\ref{sect:rayshootrayrotate}.
Section~\ref{sect:maintvpsimppolyint} devises a fully-dynamic algorithm to maintain the visibility polygon of a point interior to the simple polygon. 
Further, in Section~\ref{sect:maintvpsimppolyint}, we devise an output-sensitive algorithm to answer the visibility polygon queries when the query point is interior to the simple polygon.
Section~\ref{sect:maintvpsimppolyext} devises an output-sensitive algorithm to answer visibility polygon queries when the query point is exterior to the simple polygon. 
Section~\ref{sect:wvpsimppolyincr} devises an incremental algorithm to maintain the weak visibility polygon of a fixed edge of the simple polygon.
An algorithm to query for the weak visibility polygon of a line segment when that line segment is interior to the simple polygon is presented in Section~\ref{sect:wvpsimppolyint}.
The conclusions are in Section~\ref{sect:conclu}.

\vspace{-0.12in}

\section{Ray-shooting and ray-rotating queries in computing the visibility polygon}
\label{sect:rayshootrayrotate}

In this section, we describe ray-shooting and ray-rotating queries from the literature. 
Using these query algorithms, we devise an algorithm to compute the visibility polygon of a point located in the given simple polygon.
We need this visibility polygon computation algorithm in devising dynamic algorithms in the next section.
Further, this algorithm could be useful on its own.

First, we state a Theorem from \cite{journals/jal/GoodrichT97} that facilitates in answering ray-shooting queries amid vertex insertions and deletions.

\begin{propo}[\cite{journals/jal/GoodrichT97}~Theorem~6.3]
\label{propo:rayshoot}
Let $\mathcal{T}$ be a planar connected subdivision with $n$ vertices.
With $O(n)$-time preprocessing, a fully dynamic data structure of size $O(n)$-space is computed for $\mathcal{T}$ that supports point-location, ray-shooting, and shortest-distance queries in $O((\lg{n})^2)$ time, and operations InsertVertex, RemoveVertex, InsertEdge, RemoveEdge, AttachVertex, and DetachVertex in $O((\lg{n})^2)$ time, all bounds being worst-case.
\end{propo}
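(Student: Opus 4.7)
The plan is to layer three classical dynamic geometric data structures over $\mathcal{T}$ and argue that each can be maintained in $O((\lg n)^2)$ worst-case time per update while answering its corresponding queries in the same bound.

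First, I would maintain a dynamic point-location structure, for instance a trapezoidal decomposition of $\mathcal{T}$ stored in a weight-balanced binary tree in the style of Baumgart / Chiang--Tamassia. The balancing invariant is chosen so that each vertex or edge modification of $\mathcal{T}$ perturbs only $O(\lg n)$ trapezoids, and each local surgery plus rebalance costs $O(\lg n)$ per affected level, yielding $O((\lg n)^2)$ per update and per point-location query. The listed operations \textsc{InsertVertex}, \textsc{RemoveVertex}, \textsc{InsertEdge}, \textsc{RemoveEdge}, \textsc{AttachVertex}, \textsc{DetachVertex} all reduce to a bounded number of such trapezoid-level primitives.

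Second, for ray-shooting, I would use the chain-decomposition paradigm: partition the edges of $\mathcal{T}$ into $O(\lg n)$ layers of $y$-monotone chains via a centroid decomposition of the subdivision, and store each chain in a balanced BST whose internal nodes carry a summary upper/lower hull of the corresponding subchain, maintained in the Overmars--van Leeuwen style so that each insertion or deletion costs $O((\lg n)^2)$ and each ray-vs-subchain test costs $O(\lg n)$. A query then begins at the trapezoid containing the source of $\overrightarrow{r}$ (located by the first structure), and descends the chain hierarchy, performing one $O(\lg n)$ hull test per level; across $O(\lg n)$ levels this gives $O((\lg n)^2)$ total.

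Third, for shortest-distance queries, I would maintain a triangulation of each face of $\mathcal{T}$ and keep the sleeve/funnel structure on top of a Sleator--Tarjan link-cut tree (or a top tree) over the dual of that triangulation. Each topological modification to $\mathcal{T}$ triggers $O(\lg n)$ link/cut/expose primitives, each of $O(\lg n)$ cost, again $O((\lg n)^2)$ per update; a distance query is answered by exposing the geodesic in the tree and reading off the aggregated length.

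The main obstacle, as I see it, is coordinating these three substructures so that a single topological change to $\mathcal{T}$ (e.g.\ an \textsc{InsertEdge} crossing a trapezoid boundary and splitting a chain) triggers consistent, non-cascading $O((\lg n)^2)$ updates in all three, and pinning the bound down as worst-case rather than amortized. This is precisely what weight-balanced trees together with the Overmars global-rebuilding / partial-rebuilding conversions are designed to handle, but verifying that the worst-case guarantees survive through the composition of the three layers is the delicate step.
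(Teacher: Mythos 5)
This proposition is not proved in the paper at all: it is imported verbatim as Theorem~6.3 of Goodrich and Tamassia \cite{journals/jal/GoodrichT97}, so there is no ``paper's own proof'' to compare against --- the authors simply cite it. What you have written is an attempt to re-derive that external theorem, and it is worth saying that your route differs substantially from the one actually taken in \cite{journals/jal/GoodrichT97}. There, a \emph{single} unified structure does all the work: the subdivision is decomposed into geodesic triangles whose sides are concave (geodesic) chains, organized in a balanced binary decomposition tree; point location, ray shooting, and shortest-distance queries are all answered by descending this one tree ($O(\lg n)$ levels, $O(\lg n)$ binary search on a concave chain per level), and updates are handled by split/splice operations and rotations on the same tree. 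Your plan of layering three independent classical structures is a genuinely different architecture, and it is precisely the coordination problem you flag at the end that the unified geodesic-triangulation design is built to avoid.

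Beyond the architectural difference, there are concrete gaps in two of your three layers. For ray shooting, a first-hit test against a $y$-monotone chain cannot be reduced to an $O(\lg n)$ ``hull test'': a ray can enter and leave the convex hull of a non-convex subchain without the hull telling you where (or whether) the chain itself is first struck, so the per-level cost and the correctness of the reported first intersection both fail. The $O(\lg n)$ ray-versus-chain primitive genuinely requires the chains to be concave (as geodesic-triangle sides are), which is why the reference builds the decomposition out of geodesics rather than monotone pieces. For shortest-distance queries, a link-cut tree over the dual of a triangulation gives you path aggregation in the dual tree, but the geodesic between two query points is not a path in that tree --- it is produced by running the funnel algorithm through the sleeve and, in the dynamic setting, by maintaining concatenable hourglass structures on the decomposition pieces; ``exposing the geodesic and reading off the length'' presupposes exactly the machinery that needs to be built. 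As written, the proposal is an outline of what such a proof would have to accomplish rather than a proof, and the two failure points above are where it would break if executed literally.
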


Further, for any two query points $q', q''$ belonging to simple polygon $P$, we note that the fully dynamic data structures in \cite{journals/jal/GoodrichT97} support outputting the first line segment in the geodesic shortest-path from $q'$ to $q''$ in $O((\lg{n})^2)$ time.

Given a ray $\overrightarrow{r}$ whose origin $q$ belongs to simple polygon $P'$, the {\it ray-rotating query} (defined in \cite{journals/comgeo/ChenW15a}) with clockwise (resp. counterclockwise) orientation seeks the first vertex of $P'$ visible to $q$ that will be hit by $\overrightarrow{r}$ when we rotate $\overrightarrow{r}$ by a minimum non-negative angle in clockwise (resp. counterclockwise) direction. 
The first parameter to the ray-rotating-query algorithm is the ray, and the second one determines whether to rotate the input ray by a non-negative angle in clockwise or in the counterclockwise direction.

\begin{propo}[\cite{journals/comgeo/ChenW15a}~Lemma~1]
\label{propo:rayrot}
By preprocessing a simple polygon having $n$ vertices, a data structure can be built in $O(n)$ time and $O(n)$ space such that each ray-rotating query can be answered in $O(\lg{n})$ time.
\end{propo}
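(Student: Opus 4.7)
The plan is to support ray-rotating queries by combining a triangulation, a logarithmic-query ray-shooting data structure, and a balanced hierarchical decomposition of $P$, all buildable in $O(n)$ time and $O(n)$ space. Specifically, I would triangulate $P$ in $O(n)$ time via Chazelle's algorithm, build a linear-size $O(\lg n)$-query ray-shooting structure for simple polygons, and then construct a balanced hierarchical decomposition: recursively partition $P$ by a diagonal that splits it into two sub-polygons of sizes between $n/3$ and $2n/3$, producing a tree $T$ of depth $O(\lg n)$. At each node I would store its bounding diagonal $d$, pointers to the two child sub-polygons, and $O(1)$ angular summary information for the reflex vertices on the boundary chain of each child.

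For a query $(\vec r,\mathrm{cw})$ with origin $q$, I would first ray-shoot along $\vec r$ to find the first boundary hit $p^*$ and the edge $e^*=v_iv_{i+1}$ containing it, in $O(\lg n)$ time. The answer $v^*$ is then either (a) the clockwise endpoint of $e^*$, if it is visible from $q$, or (b) a reflex vertex $w$ visible from $q$ such that the ray $\overrightarrow{qw}$ crosses $e^*$ strictly between $p^*$ and $v_{i+1}$ and makes the smallest positive angle with $\vec r$. To detect case (b), I would walk from the leaf of $T$ containing $e^*$ up toward the root; at each internal node whose diagonal $d$ separates the already-handled sub-polygon from a neighbouring sub-polygon $P_j$, I would test in $O(1)$ time, using the stored angular summary of $P_j$ together with a constant number of orientation tests against $d$ and $q$, whether any reflex vertex of $P_j$ can be an occluder inside the relevant sub-interval of $e^*$. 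If so, I would descend inside $P_j$, at each level using its stored summaries to keep exactly one of the two children in play, terminating after $O(\lg n)$ steps with the unique candidate $w$.

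The principal obstacle is certifying, in constant time per node, that no vertex of a skipped sub-polygon can be the first occluder. I would resolve this by storing at each node, for its sub-polygon $P_i$, the convex hull of the boundary chain of $P_i$ together with the two extreme angular rays that the reflex vertices of $P_i$ subtend at the endpoints of the bounding diagonal $d_i$. Since every line of sight from $q$ into $P_i$ must cross $d_i$, comparing $\vec r$ against these two precomputed extreme angles, after an orientation test that places $q$ with respect to $d_i$, suffices to decide whether any relevant occluder exists inside $P_i$. Because each of the $O(\lg n)$ levels of $T$ contributes $O(1)$ work, and the final focused descent through $T$ is itself $O(\lg n)$ deep, the total query time is $O(\lg n)$; the preprocessing is clearly $O(n)$ in time and space, which proves the claim.
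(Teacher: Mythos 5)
You should first note that the paper does not actually prove this proposition: it is quoted verbatim as Lemma~1 of Chen and Wang, and the sentence immediately following it in the paper records the intended argument, namely a reduction to one ray-shooting query plus one two-point geodesic shortest-path query on the Guibas--Hershberger structures. Concretely: shoot $\overrightarrow{r}$ to find the first hit point $p^*$ on an edge $e^*=v_iv_{i+1}$; let $v$ be the endpoint of $e^*$ on the side toward which the ray rotates; the answer to the query is the second vertex of the geodesic shortest path from $q$ to $v$ (which is $v$ itself when $v$ is visible), by the standard funnel/taut-string argument, and both queries cost $O(\lg{n})$ after $O(n)$ preprocessing. Your opening reduction --- ``the answer is either the appropriate endpoint of $e^*$, or the visible occluder of the relevant portion of $e^*$ of minimum positive angle at $q$'' --- is correct and is exactly the first half of that argument, but the bespoke data structure you then build to locate the occluder does not work as described.

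The gaps are these. First, the $O(1)$-per-node certificate is unsound: the ``two extreme angular rays that the reflex vertices of $P_j$ subtend at the endpoints of $d_j$'' are preprocessing-time quantities measured from the diagonal's endpoints, whereas the quantity you must bound is the angle at the query point $q$; a constant number of orientation tests against $d_j$ and $q$ does not convert one into the other. Worse, lying in the correct angular window at $q$ is neither necessary nor sufficient for being the first occluder: the candidate must also lie between $q$ and $e^*$ along its ray and must be \emph{visible} from $q$, and certifying visibility is precisely the difficulty the summaries do not address --- a reflex vertex of $P_j$ can sit in the window yet be hidden behind another boundary chain, or lie beyond $e^*$. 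Second, storing the convex hull of the boundary chain at every node costs $\Theta(n)$ per level of the balanced decomposition tree, i.e., $\Theta(n\lg{n})$ space overall, contradicting the claimed $O(n)$ bound. Third, the descent that ``keeps exactly one of the two children in play'' presumes the angularly first visible occluder can be localized to one child in $O(1)$ time; since both children can contain candidates in the window, correctness would require comparing the best candidate of each child, which the stored information does not support. The clean repair is to discard the decomposition entirely and use the shortest-path reduction above, which is what the cited source does.
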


To support ray-rotating queries, \cite{journals/comgeo/ChenW15a} in turn uses ray-shooting data structure and two-point shortest distance query data structures from \cite{journals/jcss/GuibasH89}. 
But for supporting dynamic insertion (resp. deletion) of vertices to (resp. from) the given simple polygon, we use data structures from \cite{journals/jal/GoodrichT97} in place of the ones from \cite{journals/jcss/GuibasH89}.
This leads to preprocessing simple polygon $P$ defined with $n$ vertices in $O(n)$ time to compute data structures of $O(n)$ space so that ray-shooting, ray-rotating, and two-point distance queries are answered in $O((\lg{n})^2)$ worst-case time.

For any two arbitrary rays $r_1$ and $r_2$ with their origin at $q$, the $cone(r_1, r_2)$ comprises of a set $S$ of points in $\mathbb{R}^2$ such that $x \in S$ whenever a ray $r$ with origin at $q$ is rotated with center at $q$ from the direction of $r_1$ to the direction of $r_2$ in counterclockwise direction the ray $\overrightarrow{qx}$ occurs.
The $opencone(r_1,r_2)$ is the $cone(r_1,r_2) \setminus \{r_1,r_2\}$.

Let $P$ be a simple polygon and let $q$ be a point located in $\mathbb{R}^2$.
Also, let $r_1$ and $r_2$ be two rays with origin at $q$.
The $visvert$-$inopencone$ algorithm listed underneath outputs all the vertices of $P$ that are visible from $q$ in the $opencone(r_1, r_2)$.
This is accomplished by issuing a series of ray-rotating queries in the $opencone(r_1, r_2)$, essentially sweeping the $opencone(r_1, r_2)$ region to find all the vertices of $P$ that are visible from $q$.

\begin{figure}[h]
\centering
\includegraphics[totalheight=1.3in]{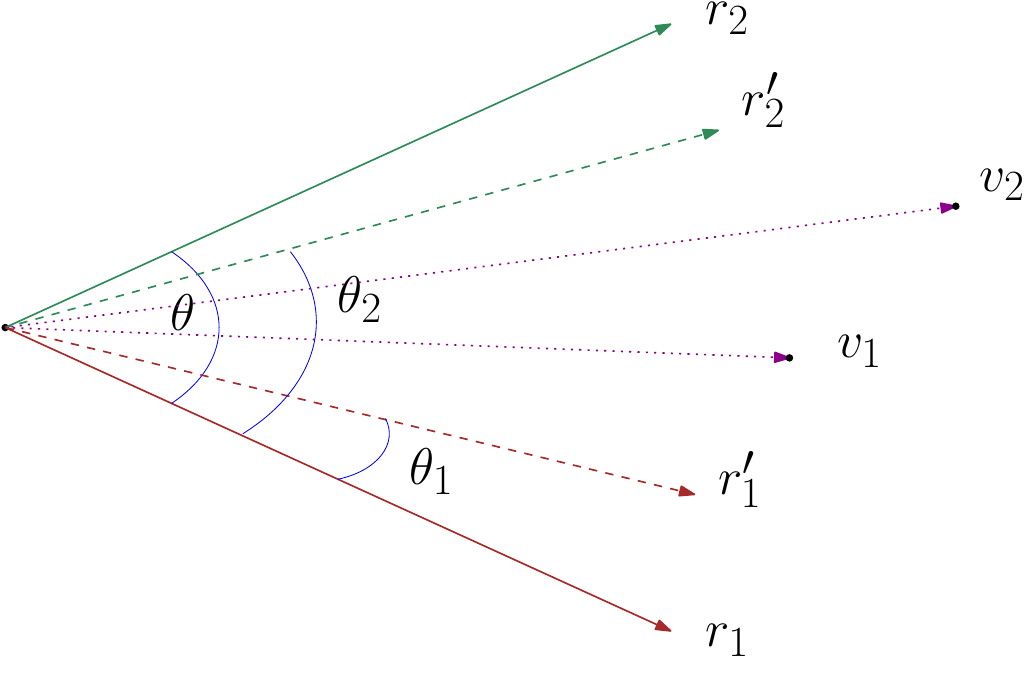}
\vspace{-0.15in}
\caption{\footnotesize Illustrating three cones, each correspond to a recursive call in $visvert$-$inopencone$ algorithm}
\label{fig:visvert}
\end{figure}

We compute the $VP(q)$ by invoking Algorithm~\ref{algo:visvert-inopencone} (listed underneath) with ray $r_1$ as both the first parameter as well as the second parameter.
This invocation yields all the vertices of $P$ that are visible from $q$ except the ones that may lie along the ray $r_1$.
And, one invocation of ray-shooting-query with ray $r_1$ outputs any vertex along the ray $r_1$ that is visible from $q$.

\hfil\break

\begin{algorithm}[H]
\label{algo:visvert-inopencone}

\SetKwInOut{Input}{Input}
\SetKwInOut{Output}{Output}

\caption{$visvert$-$inopencone(r_1, r_2)$}

\Input{A simple polygon $P$ and two rays $r_1$ and $r_2$ with their origin at a point $q \in P$}

\Output{vertices in $P \cap opencone(r_1, r_2)$ that are visible from $q$}

\begin{algorithmic}[1]

\STATE
$\theta := \cos^{-1}(\frac{r_1.r_2}{|r_1| |r_2|})$;
If $\theta$ equals to $0$, then $\theta := \pi$.
\STATE
Let $r_1'$ and $r_2'$ be the rays with origin at $q$ that respectively make $\theta_1$ and $\theta_2$ counterclockwise angles with ray $r_1$ such that $\theta_1 < \theta_2 < \theta$.
(Refer to Fig.~\ref{fig:visvert}.)

\STATE 
$v_1 :=$ ray-rotating-query$(r_1', counterclockwise)$
\STATE 
$v_2 :=$ ray-rotating-query$(r_2', clockwise)$
\STATE 
Output $v_1, v_2$

\STATE
$visvert$-$inopencone(r_1, r_1')$ 
\STATE 
$visvert$-$inopencone(r_2', r_2)$

\STATE If $v_1 \ne v_2$ then $visvert$-$inopencone(\overrightarrow{qv_1}, \overrightarrow{qv_2})$

\end{algorithmic}
\end{algorithm}

\hfil\break

We note that the ray-rotating-query from \cite{journals/comgeo/ChenW15a} works only if the input ray does not pass through a vertex of the simple polygon visible from that ray's origin. 
To take this into account, in Algorithm~\ref{algo:visvert-inopencone}, we perturb $r_1$ (resp. $r_2$) to obtain another ray $r_1'$ (resp. $r_2'$) such that $r_1' \in opencone(r_1, r_2)$ (resp. $r_2' \in opencone(r_1, r_2)$) and $q$ is the origin of $r_1'$ (resp. $r_2'$). 
After obtaining visible vertices $v_1$ and $v_2$ in steps (3) and (4) via ray-rotating queries, in steps (6), (7), and (8) of the Algorithm, we recursively compute vertices visible in three cones.
In addition, for every vertex $v$ visible from $q$, our algorithm shoots a ray $\overrightarrow{qv}$ to determine the possible constructed edge on which $v$ resides.

\begin{lemma}
\label{lem:visvertopenconecorr}
The $visvert$-$inopencone(r_1, r_2)$ algorithm outputs every vertex of $P$ in $opencone(r_1, r_2)$.
\end{lemma}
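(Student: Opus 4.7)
The plan is to proceed by induction on the number $k$ of vertices of $P$ lying in $opencone(r_1, r_2)$ that are visible from $q$. The key observation is the specification of the ray-rotating query: $v_1 := \text{ray-rotating-query}(r_1', counterclockwise)$ is, by definition, the first vertex of $P$ visible from $q$ that is encountered when $r_1'$ is rotated by a minimum non-negative counterclockwise angle, and symmetrically for $v_2$. Hence if we choose the perturbations so that $\theta_1$ is small enough that no vertex of $P$ visible from $q$ lies in $opencone(r_1, r_1')$, and $\theta_2$ is close enough to $\theta$ that no vertex of $P$ visible from $q$ lies in $opencone(r_2', r_2)$, then $v_1$ is the first visible vertex encountered when sweeping counterclockwise from $r_1$ inside the cone and $v_2$ is the first visible vertex encountered when sweeping clockwise from $r_2$ inside the cone.

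For the base case $k=0$, none of the ray-rotating queries return a vertex inside the cone, and the recursive calls likewise bottom out, so no vertex is falsely reported and nothing is missed. For the inductive step with $k \ge 1$, steps (3)--(5) correctly output $v_1$ and $v_2$, both of which lie in $opencone(r_1, r_2)$. By the choice of $\theta_1$, the recursive call $visvert\text{-}inopencone(r_1, r_1')$ operates on a cone with no visible vertex and thus outputs nothing; the same holds for $visvert\text{-}inopencone(r_2', r_2)$. By the definition of the ray-rotating queries, $opencone(r_1', \overrightarrow{qv_1})$ and $opencone(\overrightarrow{qv_2}, r_2')$ contain no vertex visible from $q$. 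Therefore the only vertices of $P$ in $opencone(r_1, r_2)$ not already output lie in $opencone(\overrightarrow{qv_1}, \overrightarrow{qv_2})$; this cone contains at most $k - 2$ visible vertices (strictly fewer than $k$ when $v_1 \ne v_2$, and is empty when $v_1 = v_2$, the case in which step (8) avoids recursion), so the induction hypothesis finishes the proof. Termination follows since each nontrivial recursive call strictly decreases the count of visible vertices in the cone argument.

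The main obstacle is that the algorithm does not concretely pin down $\theta_1$ and $\theta_2$, and it does not specify what the ray-rotating query returns when no visible vertex lies in the rotation direction inside the cone. For the induction above to be fully rigorous, one needs to fix conventions that (i) the perturbations are small enough to isolate $r_1$ and $r_2$ from the finite set of vertices visible from $q$, which is possible since only finitely many directions from $q$ point to such vertices, and (ii) the ray-rotating query either returns a sentinel outside $opencone(r_1, r_2)$ or is post-filtered, so that $v_1, v_2$ are only output and recursed upon when they actually lie in the cone. With these conventions made precise, the inductive argument above carries through without further difficulty.
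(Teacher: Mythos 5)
Your proof is correct and follows essentially the same route as the paper's: decompose $opencone(r_1,r_2)$ into the subcones separated by the rays through $v_1$ and $v_2$, observe that the separating rays' vertices are reported directly, and induct over the recursive calls. You are more careful than the paper — which gives only a one-sentence structural induction on the recursion tree — in that you explicitly handle the base case, termination, and the conventions on the perturbations $\theta_1,\theta_2$ and on queries that find no vertex; these are genuine loose ends in the algorithm's statement that the paper's proof silently glosses over.
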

\begin{proof}
Consider any non-leaf node $b'$ of the recursion tree.
The open cone corresponding to $b'$ is divided into three open cones.
Assuming that the set $S$ of vertices of $P$ within these three open cones are computed correctly, the vertices in $S$ together with the vertices computed at node $b'$ of the recursion tree along with the rays that separate these three cones ensure the correctness of the algorithm.
\end{proof}

If there are $k$ vertices of $P$ that are visible from $q$, our algorithm takes $O(k(\lg{n})^2)$ time: it involves $O(k)$ ray-rotating queries, $O(k)$ ray-shooting queries to compute the respective constructed edges, and $O(k\lg{k})$ time to sort the vertices of $VP(q)$ according to their angular order. 

\begin{theorem}
\label{thm:visvertopencone}
Our algorithm preprocesses the given simple polygon $P$ defined with $n$ vertices in $O(n)$ time and computes $O(n)$ spaced data structures to facilitate in answering the visibility polygon $VP(q)$ of any given query point $q \in P$ in $O(k(\lg{n})^2)$ time.
Here, $k$ is the number of vertices of $VP(q)$.
\end{theorem}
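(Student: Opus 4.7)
The plan is to justify three bounds in sequence: the preprocessing time and space, the per-query running time, and the correctness of the overall scheme.

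For preprocessing, I would invoke Proposition~\ref{propo:rayshoot} on $P$ viewed as a planar connected subdivision with $n$ vertices; this yields an $O(n)$-space data structure built in $O(n)$ time that supports point-location, ray-shooting, and two-point shortest-distance queries in $O((\lg n)^2)$ worst-case time. As remarked just after Proposition~\ref{propo:rayrot}, the ray-rotating data structure of \cite{journals/comgeo/ChenW15a} is built on top of ray-shooting and two-point shortest-distance primitives from \cite{journals/jcss/GuibasH89}; by substituting those primitives with the dynamic ones of \cite{journals/jal/GoodrichT97}, the construction becomes a black-box replacement that preserves the $O(n)$ space and $O(n)$ preprocessing bounds while raising the ray-rotating query time to $O((\lg n)^2)$.

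For a query point $q \in P$, I would first confirm $q \in P$ via one point-location query, then pick an arbitrary ray $r_1$ with origin at $q$, issue one ray-shooting query along $r_1$ to pick up any vertex visible from $q$ that lies on $r_1$, and finally call $visvert$-$inopencone(r_1, r_1)$ to enumerate every remaining vertex of $P$ visible from $q$. For each reported vertex $v$ I would issue one additional ray-shooting query along $\overrightarrow{qv}$ to identify the constructed edge (if any) supported by $v$. I would then sort the $O(k)$ reported vertices by angle around $q$ and interleave the computed constructed edges, producing the boundary of $VP(q)$.

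The running-time analysis rests on bounding the size of the recursion tree of $visvert$-$inopencone$ by $O(k)$: each internal node reports (up to) two previously unseen visible vertices $v_1, v_2$ via the ray-rotating queries in steps~3--4, while leaves correspond either to cones bounded by two already-reported visible vertices or to empty cones, both of which can be charged to the ray-rotating queries at the parent. This gives $O(k)$ ray-rotating queries at $O((\lg n)^2)$ each, $O(k)$ ray-shooting queries for the constructed edges at $O((\lg n)^2)$ each, and an $O(k \lg k) = O(k \lg n)$ angular sort, summing to $O(k(\lg n)^2)$. Correctness of the vertex enumeration is furnished by Lemma~\ref{lem:visvertopenconecorr}, and the assembly of $VP(q)$ from the angular sort together with the ray-shot constructed edges is routine.

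The main obstacle I anticipate is the recursion-tree charging argument: one must carefully ensure that the perturbation of $r_1, r_2$ into $r_1', r_2'$ in steps~1--2 does not cause the same visible vertex to be charged at two distinct recursive calls, and that the terminating condition $v_1 \neq v_2$ in step~8 prevents infinite recursion once all vertices in a cone have been discovered. Once this amortization is in place, the time bound follows directly from Propositions~\ref{propo:rayshoot} and~\ref{propo:rayrot} (in their dynamized form), and the space bound is inherited unchanged from Proposition~\ref{propo:rayshoot}.
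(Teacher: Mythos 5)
Your proposal is correct and follows essentially the same route as the paper: preprocessing via Proposition~\ref{propo:rayshoot} with the ray-rotating structure of \cite{journals/comgeo/ChenW15a} rebuilt on the dynamic primitives of \cite{journals/jal/GoodrichT97}, a query consisting of one call to $visvert$-$inopencone(r_1,r_1)$ plus $O(k)$ ray-shooting queries for constructed edges and an $O(k\lg k)$ angular sort, and correctness from Lemma~\ref{lem:visvertopenconecorr}. Your explicit recursion-tree charging argument for the $O(k)$ bound on ray-rotating queries is a detail the paper asserts without elaboration, but it is the same underlying accounting.
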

\begin{proof}
The correctness is immediate from Lemma~\ref{lem:visvertopenconecorr}.
The preprocessing and query complexities are argued above.
\end{proof}

\section{Maintaining the visibility polygon of a fixed point $q \in P'$}
\label{sect:maintvpsimppolyint}

In this section, we consider the problem of updating the $VP(q)$ when a new vertex is inserted to the current simple polygon $P'$ or an existing vertex is deleted from $P'$, resulting in a new simple polygon $P''$.
The following simple data structures are used in our algorithms:

\begin{itemize}
\item[-] A circular doubly linked list $L_{P'}$, whose each node stores a unique vertex of $P'$. 
The order in which the vertices of $P'$ occur while traversing $bd(P')$ in the counterclockwise direction (starting from an arbitrary vertex of $P'$) is the order in which vertices occur while traversing $L_{P'}$ in counterclockwise direction.

\item[-] A circular doubly linked list $L_{vp}$, whose each node stores a unique vertex of the visibility polygon of a point $q$.
The order in which the vertices of $VP(q)$ occur while traversing $bd(VP(q))$ in the counterclockwise direction is the order in which vertices occur while traversing $L_{vp}$ in the counterclockwise direction.

\item[-]
A balanced binary search tree (such as the one given in \cite{books/dsnetworkalgo/tarjan1983}) $B_{P'}$ (resp. $B_{vp}$) with nodes of $L_{P'}$ (resp. $L_{vp}$) as leaves.
The left-to-right order of leaves in $B_{P'}$ (resp. $B_{vp}$) is same as the order in which the nodes of $L_{P'}$ (resp. $L_{vp}$) occur while traversing $L_{P'}$ (resp. $L_{vp}$) in counterclockwise order, starting from an arbitrary node of $L_{P'}$ (resp. $L_{vp}$).
Further, we maintain pointers between the corresponding leaf nodes of $B_{P'}$ and $B_{vp}$ representing the same vertex.

\item[-] For each edge $e$, an array is associated with it to save the constructed vertices that are incident on $e$.

\item[-] And, the data structures needed for the dynamic ray shooting and two-point shortest-distance queries from \cite{journals/jal/GoodrichT97}.
\end{itemize}

After every insertion as well as deletion of any vertex, we update the data structures relevant to ray-shooting queries.
As mentioned, ray-rotating queries mainly rely on ray-shooting query data structures.
Assuming that the current simple polygon $P'$ is defined with $n'$ vertices, due to Proposition~\ref{propo:rayshoot}, these updates take $O((\lg{n'})^2)$ time.

\section*{Inserting a vertex}
\label{subsect:insertvertintsimppoly}

Let $v$ be the vertex being inserted to the current simple polygon $P'$.
And, let the edge $v_iv_{i+1}$ of $P'$ be replaced with edges $v_iv$ and $vv_{i+1}$, resulting in a new simple polygon $P''$.
Our algorithm handles the following two cases independently: (i) vertex $v$ is visible from $q$ in $P''$, (ii) $v$ is not visible from $q$ in $P''$. 
With the ray-shooting query with ray $\overrightarrow{qv}$, we check whether this ray strikes $bd(P'')$ at $v$.
If it is, then the vertex $v$ is visible from $q$ in $P''$.
Otherwise, it is not.

\begin{figure}
\centering
\includegraphics[totalheight=1.2in]{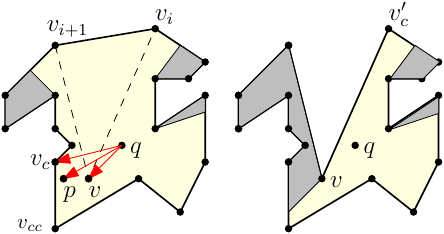}
\vspace{-0.15in}
\caption{\footnotesize Illustrating the case in which the inserted vertex $v$ is visible from $q$ in the new simple polygon $P''$; $VP(q)$ is shaded in light yellow; points interior to grey shaded polygonal regions are not visible from $q$.}
\label{fig:simppolyintinserti}
\end{figure}

In case~(i), using balanced binary search tree $B_{vp}$, we find two vertices $v_c, v_{cc}$ such that the ray $qv$ lies in the $cone(\overrightarrow{qv_c}, \overrightarrow{qv_{cc}})$ and $v_c, v_{cc}$ successively occur in that order while traversing $bd(VP(q))$ in counterclockwise direction. 
(Refer to Fig.~\ref{fig:simppolyintinserti}.)
We determine whether the triangle $qvv_{cc}$ or the triangle $qvv_c$ or both intersects with the triangle $v_ivv_{i+1}$.
Suppose the triangle $qvv_{c}$ intersects with the triangle $v_ivv_{i+1}$.
(The other two cases are handled analogously.)
We choose a point $p$ in the $opencone(\overrightarrow{qv}, \overrightarrow{qv_{c}})$ and invoke ray-rotate query with ray $qp$ in the clockwise direction in $P''$ to find the vertex $v_{c}'$ of $P''$. 

\begin{lemma}
\label{lem:contseqblock}
Let $VP(q)$ be the visibility polygon of a point $q \in P'$.
When a vertex $v$ is inserted to the boundary of a simple polygon $P'$, either no vertex in $VP(q)$ is hidden due to the insertion of $v$ or the vertices in $VP(q)$ that are hidden due to the insertion of $v$ are consecutive along the boundary of $VP(q)$. 
\end{lemma}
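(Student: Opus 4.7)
The plan is to reduce the claim to an angular-interval argument about the rays from $q$. The key structural observation is that inserting $v$ between $v_i$ and $v_{i+1}$ changes the polygon only by adding or removing the triangle $T = \triangle v_i v v_{i+1}$, so the only segments $qu$ that can change from unblocked to blocked are those whose interior meets $T$.

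First I would split into two cases according to which side of the old edge $v_iv_{i+1}$ the new vertex $v$ lies. If $T$ is added to $P'$ (so $P'' \supseteq P'$), then for any $u \in VP(q)$ the segment $qu$ lies in $P' \subseteq P''$ and remains unobstructed, so the set of hidden vertices is empty and the claim is vacuously true. The substantive case is when $T$ is removed, i.e.\ $P'' = P' \setminus T$; in particular $q \notin T$, since $q$ remains interior to $P''$ by assumption.

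In the substantive case I would show that a vertex $u \in VP(q)$ is hidden in $P''$ if and only if the direction $\theta(u)$ of the ray $\overrightarrow{qu}$ lies in the single angular interval $[\alpha,\beta]$ subtended at $q$ by the convex set $T$. The interval is well-defined and convex because $T$ is convex and $q \notin T$, so its two bounding rays are precisely the support rays from $q$ to $T$. For $\theta(u)$ strictly inside $[\alpha,\beta]$, the ray $\overrightarrow{qu}$ must enter $T$ before reaching $u$, so $qu$ crosses one of the new edges $v_iv$ or $vv_{i+1}$ and $u$ is blocked; for $\theta(u) \notin [\alpha,\beta]$, the segment $qu$ misses $T$ entirely and so remains inside $P''$, preserving visibility.

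Finally I would invoke that $q$ lies in the kernel of $VP(q)$, so $VP(q)$ is star-shaped with respect to $q$ and its boundary is swept out in monotone angular order around $q$ as we traverse $bd(VP(q))$ counterclockwise. The characterization above identifies the hidden vertices as exactly those whose angular coordinate lies in the single interval $[\alpha,\beta]$, which corresponds to a contiguous block along $bd(VP(q))$. The main subtlety I expect to handle is the treatment of constructed vertices of $VP(q)$ and of the degenerate endpoint directions $\alpha$ and $\beta$ (where $\overrightarrow{qu}$ grazes a vertex of $T$): one needs to verify that each such vertex still has a well-defined angular coordinate with respect to $q$ and that grazing hits at the endpoints do not break the contiguity of the hidden block.
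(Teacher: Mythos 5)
Your reduction to an angular argument is the right instinct, and several of your preliminary observations are sound and worth keeping: the split into the vacuous case (triangle added, so nothing is hidden) versus the substantive case (triangle $T=\triangle v_ivv_{i+1}$ carved out), the fact that $q\notin T$, and the use of star-shapedness of $VP(q)$ to get monotone angular order along $bd(VP(q))$. The gap is in the claimed characterization ``$u$ is hidden iff $\theta(u)\in[\alpha,\beta]$'', specifically the step ``for $\theta(u)$ strictly inside $[\alpha,\beta]$, the ray $\overrightarrow{qu}$ must enter $T$ before reaching $u$.'' That the \emph{ray} through $u$ meets $T$ does not imply that the \emph{segment} $qu$ meets $T$: the vertex $u$ can lie strictly between $q$ and $T$ along that ray. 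Concretely, take $P'$ to be the square $[-10,10]^2$ with a thin rectangular notch cut downward from the top edge over $x\in[-0.1,0.1]$, $y\in[5,10]$, let $q=(0,0)$, and insert $v=(1,0.5)$ between $v_i=(10,10)$ and $v_{i+1}=(0.1,10)$. The angular interval of $T$ at $q$ is about $[26.6^\circ,89.4^\circ]$, and the reflex vertex $u=(0.1,5)$ of $VP(q)$ has direction about $88.9^\circ$, strictly inside it; yet $qu$ is disjoint from $T$ (all of $T$ satisfies $x\ge 0.1$, meeting the line $x=0.1$ only at $(0.1,10)$), so $u$ remains visible, while the constructed vertex $(0.2,10)$ on the very same ray \emph{is} hidden. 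So membership of $\theta(u)$ in $[\alpha,\beta]$ is necessary but not sufficient, and your biconditional is false.

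Because the hidden set is only a proper subset of the vertices whose directions lie in $[\alpha,\beta]$, contiguity no longer follows from the convexity of $T$ alone: you must additionally rule out a surviving vertex being angularly sandwiched between two hidden ones. The missing ingredient is a proof that the set of directions at which the maximal visible segment from $q$ actually reaches $\mathrm{int}(T)$ is itself an interval. This needs a connectivity/Jordan-curve argument: any connected piece of $bd(P')\setminus v_iv_{i+1}$ lying inside the triangle $qv_iv_{i+1}$ must have its endpoints on the two sides $qv_i$ and $qv_{i+1}$, hence it shields an angular subinterval adjacent to one of the two extreme directions (or shields everything), so the unshielded directions form a single middle interval; one then checks that at a window direction carrying both a reflex vertex and its constructed vertex, only the outer one is hidden and it is adjacent to the hidden block in the traversal. (For what it is worth, the paper's own proof also argues purely about the cone of rays meeting $T$ and is terse on exactly this point; but your write-up commits explicitly to the false implication, which is where the argument concretely breaks, and which would also cause the update algorithm to delete vertices such as $u$ above that remain visible.)
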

\begin{proof}
Let $v$ be the vertex inserted into $bd(P')$ between vertices $v_i$ and $v_{i+1}$.
A vertex $u$ of $VP(q)$ is not visible due to the insertion of $v$ whenever the line segment $qu$ intersects with the triangle $v_ivv_{i+1}$.
If the triangle $vv_iv_{i+1}$ does not intersect with $VP(q)$, then no vertex in $VP(q)$ gets hidden due to the insertion of $v$ to $P'$.
Without loss of generality, suppose the ray $qv_{i+1}$ makes a larger angle with ray $qv$ than does the ray $qv_i$.
Consider the cone $C_q$ with apex at $q$, bounded by rays $qv$ and $qv_{i+1}$, and the ray $qv_i$ intersecting the interior of $C_q$. 
We note that the set of rays from $q$ to vertices of $VP(q)$ that intersect cone $C_q$ are contiguous among all the rays that originate from $q$.
\end{proof}

Since every vertex that occur while traversing $bd(VP(q))$ from $v_{c}$ to $v_{c}'$ in clockwise direction is not visible from $q$, we delete all of these vertices from $B_{vp}$.
Due to Lemma~\ref{lem:contseqblock}, no other vertex needs to be deleted from $VP(q)$.
Further, we include $v$ between nodes $v_c$ and $v_{cc}$ in $L_{vp}$, and by using the appropriate keys we insert $v$ into $B_{P'}$ so that $B_{vp}$ is a balanced binary search tree over the nodes of $L_{vp}$. 
Similarly, we insert $v$ into $L_{P'}$ and $B_{P'}$.
Using ray-shooting queries, we compute the constructed edges that could be incident to $v$.
And, we include the new constructed vertex into both $L_{vp}$ and $B_{vp}$.

For the Case~(ii), the vertex $v$ is not visible from $q$.
(Refer to Fig.~\ref{fig:simppolyintinsertii}.)
Using $B_{vp}$, we determine the set $R$ of points of intersection of the line $l_{vv_i}$ that supports the line segment $vv_i$ with $VP(q)$.
Among all the points in $R$, let $p$ be the point that is at a larger distance from $v_i$.
The distance between $v_i$ to $p$ in comparison to the distance between $v_i$ and $v$ determines whether $v$ is in an occluded region.
Suppose for two constructed vertices $v_1', v_2'$ of $VP(q)$, the constructed edges $v_1'v_1''$ and $v_2'v_2''$ along $\overrightarrow{qv_1'}$ and $\overrightarrow{qv_2'}$ intersect triangle $v_ivv_{i+1}$.
When such pairs do not exist, it means that no section of the boundary of the triangle $v_ivv_{i+1}$ is visible from $q$, and $v$ is in a region whose interior is occluded from $q$.
(Note that the algorithm described herewith to handle Case~(ii) can be simplified to work when no constructed edge intersects with $v_iv$ or $v_{i+1}v$ as well.)

\begin{figure}[h]
\centering
\includegraphics[totalheight=1.3in]{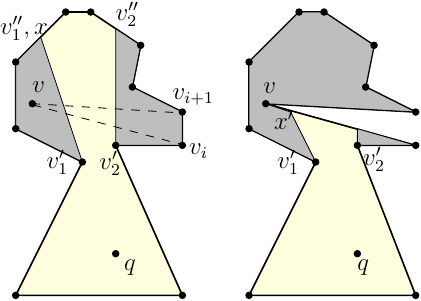}
\vspace{-0.15in}
\caption{\footnotesize Illustrating Case (ii) in which the inserted vertex $v$ is not visible from $q$ in the new simple polygon $P''$.}
\label{fig:simppolyintinsertii}
\end{figure}

Otherwise, let $v_1'x$ be the constructed edge that intersects $vv_{i}$.
(Analogous cases are handled similarly.)
We replace the constructed edge $v_1'x$ with $v_1'x'$, where $x'$ is the point of intersection of $v_1'x$ with $v_iv$ by including $x'$ into data structures $L_{vp}$ and $B_{vp}$.
Since no vertex in the clockwise traversal of $bd(VP(q))$ from $v_1''$ to $v_2''$ is visible from $q$ due to triangle $v_ivv_{i+1}$, we remove each of these vertices from both $L_{vp}$ and $B_{vp}$.
Further, the constructed edges of $v_i$ and $v_{i+1}$ are updated if necessary.

In both the cases, for every constructed vertex $p$ that is incident to edge $v_iv_{i+1}$ of $P'$, let $v_j$ be the vertex of $P'$ that is incident to line segment $qp$. 
We ray-shoot with ray $\overrightarrow{v_jp}$ in $P''$ to find the new constructed vertex that is incident to either the edge $v_iv$ or the edge $vv_{i+1}$ of $P''$.
(Refer to Fig.~\ref{fig:incrconstrvert}.)

\begin{figure}[h]
\centering
\includegraphics[totalheight=1.2in]{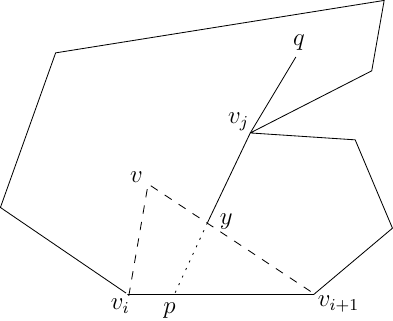}
\vspace{-0.15in}
\caption{\footnotesize Illustrating the modification of constructed vertices that lie on $v_iv_{i+1}$ due to the insertion of vertex $v$.}
\label{fig:incrconstrvert}
\end{figure}

In updating the $VP(q)$, there are $O(1)$ ray-shooting and ray-rotating queries involved.
It takes $O(\lg{n'})$ time per deletion of a vertex from $B_{P'}$ or $B_{vp}$ and $O(1)$ time per deletion of a vertex from $L_{P'}$ or $L_{vp}$.
Here, $n'$ is the number of vertices of the current simple polygon $P'$.
Hence, the time complexity of our algorithm in updating the visibility polygon due to the insertion of $v$ is $O((k+1)(\lg{n'})^2)$, where $k$ is the number of vertices that required to be added and/or removed from the visibility polygon of $q$ due to the insertion of $v$ to $P'$.

\begin{lemma}
\label{lem:2b}
Let $P'$ be a simple polygon.
Let $v$ be the vertex inserted to $bd(P')$, resulting in a simple polygon $P''$.
Let $q$ be a point belonging to both $P'$ and $P''$.
Also, let $VP_{P'}(q)$ be the visibility polygon of $q$ in $P'$, and let $VP_{P''}(q)$ be the updated visibility polygon of $q$ computed using our algorithm.
A point $p \in VP_{P''}(q)$ if and only if $p$ is visible from $q$ in $P''$.
\end{lemma}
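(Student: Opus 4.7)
The plan is to prove both directions of the biconditional by localizing all visibility changes to the triangle $T = v_iv v_{i+1}$, and then verifying that the algorithm's updates in each of its two cases correctly transform the (inductively correct) data structures for $VP_{P'}(q)$ into data structures for $VP_{P''}(q)$. The first step is to observe that for any point $p \in P' \cap P''$, the segment $qp$ is unobstructed in $P'$ iff it is unobstructed in $P''$ whenever $qp$ does not intersect the interior of $T$; the only edges of $bd(P')$ and $bd(P'')$ that differ are the old edge $v_iv_{i+1}$ and the two new edges $v_iv$, $vv_{i+1}$, all of which lie on $bd(T)$. Hence the symmetric difference between $VP_{P'}(q)$ and the true $VP_{P''}(q)$ is confined to the angular range of rays from $q$ meeting $T$.

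For Case (i), where $v$ is visible from $q$ in $P''$, the algorithm first locates via $B_{vp}$ the two consecutive vertices $v_c,v_{cc}$ of $VP_{P'}(q)$ with $\overrightarrow{qv} \in \mathit{cone}(\overrightarrow{qv_c},\overrightarrow{qv_{cc}})$. By Lemma~\ref{lem:contseqblock}, the vertices of $VP_{P'}(q)$ that become occluded lie in a single contiguous block along $bd(VP_{P'}(q))$; the ray-rotating query in $P''$ used in the algorithm finds the first endpoint $v_c'$ of this block, and $v_c$ (or $v_{cc}$, by symmetry in the analogous case) is the other endpoint. Deleting the vertices between them from $L_{vp}$ and $B_{vp}$, inserting $v$ between $v_c$ and $v_{cc}$, and issuing the ray-shooting query through $v$ to determine its constructed edges in $P''$, therefore produces exactly the vertex set of $VP_{P''}(q)$ in the affected angular range. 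Outside this range, visibility is unchanged by the first step above, so $VP_{P'}(q)$ and $VP_{P''}(q)$ already agree there.

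For Case (ii), where $v$ is not visible from $q$ in $P''$, any change to $VP(q)$ must come from rays that, in $P'$, hit the old edge $v_iv_{i+1}$ but, in $P''$, now hit $v_iv$ or $vv_{i+1}$. The algorithm's binary search on $B_{vp}$ identifies the (at most two) transitions between visible and invisible vertices in $VP_{P'}(q)$ across which the bounding constructed edges intersect $T$; by an argument analogous to Lemma~\ref{lem:contseqblock} applied to the vertices of $VP_{P'}(q)$ whose rays from $q$ are newly blocked by $T$, these transitions bracket a contiguous block, and if no such pairs exist then $T$ is itself entirely occluded from $q$ and nothing changes. Replacing the old constructed endpoint $x$ by the intersection $x'$ of its constructed edge with $v_iv$ (and symmetrically on the other side), together with removing the now-occluded vertices between $v_1''$ and $v_2''$, produces the correct $VP_{P''}(q)$ in the affected range. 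In both cases, the closing step that re-shoots each constructed vertex previously incident to $v_iv_{i+1}$ from its sponsoring vertex $v_j$ correctly reassigns it to its new carrier edge ($v_iv$ or $vv_{i+1}$) of $P''$.

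Combining these updates, every vertex retained or added by the algorithm is visible from $q$ in $P''$ (soundness), and every vertex of $P''$ visible from $q$ appears in the updated structure (completeness, using Lemma~\ref{lem:contseqblock} to ensure no visible vertex is inadvertently discarded), yielding both directions of the biconditional. The main obstacle I expect is the case analysis for the relative position of $v$ with respect to $bd(P')$ (i.e., whether $T$ is added to or removed from $P'$) crossed with which of the triangles $qvv_c$, $qvv_{cc}$ intersects $T$; each sub-configuration needs a small but separate verification that the specific ray-rotating direction and the specific constructed-edge replacement chosen by the algorithm are the correct ones.
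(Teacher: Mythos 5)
Your argument follows the same route as the paper's own proof: soundness of the added vertices comes from the correctness of the ray-shooting queries (their results are by definition visible from $q$), and completeness comes from Lemma~\ref{lem:contseqblock}'s guarantee that the occluded vertices form a single contiguous block that the ray-rotating query delimits and the algorithm deletes. The paper states exactly this in four sentences; your explicit localization of all changes to rays meeting the triangle $v_ivv_{i+1}$ and your case-by-case verification flesh out steps the paper leaves implicit but do not constitute a different approach.
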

\begin{proof}
When a vertex is added to $P'$, the only vertices that are added to $VP_{P'}(q)$ are the vertices of the constructed edges computed using ray-shooting queries. 
Since the points returned by the ray-shooting queries are always visible from $q$, the newly added vertices to the visibility polygon of $q$ in $P'$ are visible from $q$.
From Lemma~\ref{lem:contseqblock}, visibility is blocked only for one particular consecutive set of vertices in $VP_{P'}(q)$; every such vertex is removed using ray-rotating query in $P''$.
In other words, a point $p$ of $P''$ is a vertex of $VP_{P''}(q)$ whenever $p$ is visible from $q$ in $P''$.
\end{proof}

\subsection*{Deleting a vertex}
\label{subsect:deletevertintsimppoly}

Let $v$ be the vertex to be deleted from the current simple polygon $P'$, and $v_iv$ and $vv_{i+1}$ be the edges of $P'$.
Also, let $v_iv$ and $vv_{i+1}$ occur in that order while traversing $bd(P')$ in counterclockwise direction starting from $v_i$.
Also, let $P''$ be the resultant simple polygon due to the deletion of vertex $v$ from $P'$ (and adding the edge $v_iv_{i+1}$).
Our algorithm handles the following two cases independently: (i) vertex $v$ is visible from $q$ in $P'$, (ii) $v$ is not visible from $q$ in $P'$. 

\begin{figure}
\centering
\includegraphics[totalheight=1.1in]{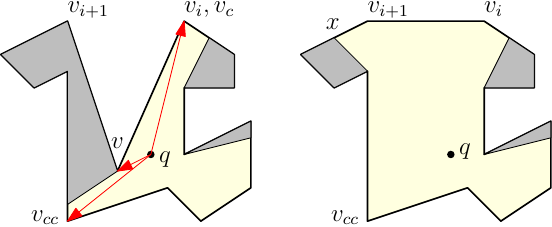}
\vspace{-0.15in}
    \caption{\footnotesize Illustrating Case (i) in which the deleted vertex $v$ is visible from $q$ in simple polygon $P'$.}
\label{fig:simppolyintdeli}
\end{figure}

\begin{lemma}
\label{lem:contseqvisdel}
Let $P'$ be  a simple polygon.
Let $P''$ be the simple polygon obtained by deleting a vertex $v$ from $P'$.
Also, let $VP(q)$ be the visibility polygon of a point $q$ interior to $P''$.
The set of vertices that become visible due to the deletion of vertex $v$ from $P'$ are consecutive along the boundary of $VP(q)$.
\end{lemma}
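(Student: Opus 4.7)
The plan is to mirror the proof of Lemma~\ref{lem:contseqblock}. Deleting $v$ alters the polygon only inside the triangle $T$ with vertices $v_i, v, v_{i+1}$, which is precisely the symmetric difference of $P'$ and $P''$: if $v$ is convex in $P'$ then $P''=P'\setminus T$, and if $v$ is reflex in $P'$ then $P''=P'\cup T$. The convex subcase is vacuous, since $P''\subseteq P'$ implies any segment that stays in $P''$ also stays in $P'$, so every point visible from $q$ in $P''$ was already visible in $P'$, and no vertex of $VP_{P''}(q)$ is newly visible.

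In the reflex subcase, $T^\circ$ lies outside $P'$ but inside $P''$. Since $q$ is interior to $P'$ and $T^\circ\cap P'=\emptyset$, we have $q\notin T$. Because $T$ is convex and $q\notin T$, the set of directions at $q$ whose outgoing ray meets $T$ forms a single closed angular interval $I=[\theta_1,\theta_2]$ bounded by the two tangent rays from $q$ to $T$. The key observation is that for every direction $\theta\notin I$, the ray from $q$ in direction $\theta$ avoids $T$ entirely, and since $bd(P')$ and $bd(P'')$ coincide outside $T$, such a ray meets $bd(P')$ and $bd(P'')$ at the same first point. Hence, the local structure of $bd(VP_{P''}(q))$ in direction $\theta$ agrees with that of $bd(VP_{P'}(q))$, and any vertex of $VP_{P''}(q)$ in direction $\theta$ was already a vertex of $VP_{P'}(q)$.

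It follows that every newly visible vertex of $VP_{P''}(q)$---whether an original polygon vertex newly unblocked, or a new constructed vertex introduced by a freshly created or extended constructed edge---lies in a direction from $q$ contained in $I$. Since $q$ lies in the kernel of $VP_{P''}(q)$, the vertices along $bd(VP_{P''}(q))$ occur in angular order around $q$, and therefore all those with directions in the single interval $I$ occupy one contiguous arc, which is the desired consecutiveness. The main obstacle will be carefully justifying the key observation above, i.e.\ that $bd(P')$ and $bd(P'')$ agree outside $T$ and that no ray avoiding $T$ is deflected by the deletion; once this is established the angular-interval argument closes the proof.
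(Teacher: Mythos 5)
Your argument is correct and is essentially the approach the paper intends: the paper's own proof of this lemma is the single line ``Immediate from Lemma~\ref{lem:contseqblock}'' (deletion of $v$ from $P'$ being the time-reverse of inserting $v$ into $P''$), and you have simply unpacked that reduction into the same angular-interval argument used there, with the added and correct observation that the convex case is vacuous. The one point that both you and the paper leave implicit is the passage from ``all newly visible vertices have directions in the single interval $I$'' to ``they are consecutive along $bd(VP(q))$'': since previously visible vertices of $VP(q)$ can also have directions in $I$, one still needs to rule out an old vertex being interleaved between two new ones (equivalently, that the portion of the removed wedge $v_iv\cup vv_{i+1}$ visible from $q$ spans a single angular window), so a sentence addressing this would make the proof airtight.
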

\begin{proof}
Immediate from Lemma~\ref{lem:contseqblock}.
\end{proof}

A ray-shooting query with ray $qv$ determines whether the vertex $v$ is visible from $q$.
If $v$ is visible from $q$ in $P'$, using $B_{vp}$, we find the predecessor vertex $v_c$ and the successor vertex $v_{cc}$ of $v$ that occur while traversing $bd(VP(q))$ in counterclockwise direction.
Then we invoke $visvert$-$inopencone$ algorithm twice with simple polygon $P''$: once with rays $qv$ and $qv_{cc}$ as respective first and second parameters; next with rays $qv_c$ and $qv$ as respective first and second parameters. 
The vertices output by these invocations are precisely the ones in $P'$ hidden from $q$ due to triangle $v_ivv_{i+1}$.
If $k$ vertices become visible from $q$ after the deletion of $v$ from $P'$, then this sub-procedure requires $O(k)$ ray-rotations. 
We compute the constructed edges corresponding to each vertex that gets visible from $q$ after the removal of $v$.
For every constructed vertex $p$ that is incident to either of the edges $vv_i$ or $vv_{i+1}$ of $P'$, let $v_j$ be the vertex of $P$ that is incident to line segment $qp$. 
We ray-shoot with ray $v_jp$ in $P''$ to find the new constructed edge on which the vertex $v_j$ is incident.
Also, we do the corresponding updates to data structures $B_{vp}$ and $B_{P'}$.

\begin{figure}[h]
\centering
\includegraphics[totalheight=1.3in]{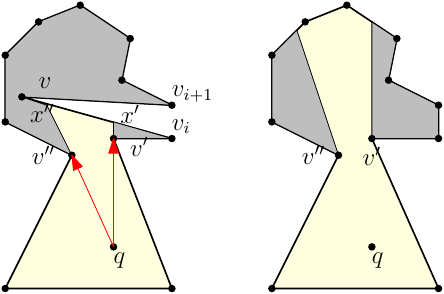}
\vspace{-0.15in}
\caption{\footnotesize Illustrating Case (ii) in which the deleted vertex $v$ is not visible from $q$ in simple polygon $P'$.}
\label{fig:simppolyintdelii}
\end{figure}

For the case~(ii), if neither $vv_i$ nor $vv_{i+1}$ has constructed vertices of $VP(q)$ stored with them, then we do nothing.
Otherwise, let $x', x''$ be the constructed vertices stored with the edge $vv_i$ such that $x''$ occurs after $x'$ in the counterclockwise ordering of the vertices of $VP(q)$. 
Refer to Fig.~\ref{fig:simppolyintdelii}.
(Handling the possible constructed vertices that could be incident to $vv_{i+1}$ are analogous.)
To find the set $S$ of vertices of $P'$ that are visible from $q$, we invoke $visvert$-$inopencone$ algorithm with rays $qx'$ and $qx''$ respectively as the first and second parameters.
For every vertex $v_l \in S$, using ray-shooting query, we determine the constructed edge corresponding to $v_l$.
Let $v'$ (resp. $v''$) be the vertex that lie on constructed edge $qx'$ (resp. $qx''$).
With two ray-shooting queries in $P'$, one with ray $v'x'$ and the other with ray $v''x''$, we find the new constructed edges on which $v'$ and $v''$ respectively lie. 

Let $k$ be the number of updates required to the visibility polygon of $q$ due to the deletion of vertex $v$ from $P'$.
In both the cases, the time complexity is dominated by $k$ ray-rotating queries, which together take $O((k+1)(\lg{n'})^2)$ time. 
(Since the deletion algorithm takes $O((\lg{n'})^2)$ time to update even when $k$ is zero, we write the update time complexity as $O((k+1)(\lg{n'})^2)$ per deletion.)

\begin{lemma}
\label{lem:2c}
Let $P'$ be a simple polygon.
Let $v$ be the vertex deleted from $bd(P')$, resulting in a simple polygon $P''$.
Let $q$ be a point belonging to both $P'$ and $P''$.
Also, let $VP_{P'}(q)$ be the visibility polygon of $q$ in $P'$, and let $VP_{P''}(q)$ be the updated visibility polygon of $q$ computed using our algorithm.
A point $p \in VP_{P''}(q)$ if and only if $p$ is visible from $q$ in $P''$.
\end{lemma}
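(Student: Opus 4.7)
The plan is to mirror the structure of the proof of Lemma~\ref{lem:2b}, but adapted to the deletion case, splitting the argument along the same two cases (i) and (ii) that drive the algorithm. First I would show the easy ``only if'' direction: every point placed into $VP_{P''}(q)$ is indeed visible from $q$ in $P''$. All vertices newly added to the visibility polygon come either from ray-rotating queries invoked inside $visvert$-$inopencone$ (by Theorem~\ref{thm:visvertopencone} these return only vertices of $P''$ visible from $q$), or from ray-shooting queries used to (re)compute constructed vertices, whose outputs lie on $bd(P'')$ along a ray originating at $q$ and so are visible from $q$. Conversely, any vertex retained from $VP_{P'}(q)$ was already visible from $q$ in $P'$; since deleting $v$ and replacing $v_iv, vv_{i+1}$ by $v_iv_{i+1}$ can only remove a wedge-shaped obstruction, such a retained point remains visible in $P''$.

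The harder ``if'' direction requires completeness: every point of $P''$ visible from $q$ must appear in $VP_{P''}(q)$. For case~(i), when $v$ was visible from $q$ in $P'$, the only points whose visibility status can change are those inside rays from $q$ that cross the triangle $v_ivv_{i+1}$. By Lemma~\ref{lem:contseqvisdel}, the newly visible vertices along $bd(VP(q))$ form a consecutive block, and they all lie inside the two cones $cone(\overrightarrow{qv}, \overrightarrow{qv_{cc}})$ and $cone(\overrightarrow{qv_c}, \overrightarrow{qv})$. The two invocations of $visvert$-$inopencone$ on precisely these cones in $P''$ therefore enumerate all newly visible vertices (by Lemma~\ref{lem:visvertopenconecorr}), and the associated ray-shooting queries correctly produce their constructed edges. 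For case~(ii), when $v$ was not visible from $q$, any newly visible point must lie inside a region of $P'$ that was previously occluded by a constructed edge incident to $vv_i$ or $vv_{i+1}$; hence the algorithm sweeps exactly the cone bounded by the stored constructed vertices $x', x''$ on those edges, which, by Lemma~\ref{lem:visvertopenconecorr} again, captures every newly visible vertex of $P''$.

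The main obstacle will be arguing rigorously that case~(ii) is genuinely localized. Concretely, I need to show that deleting an invisible $v$ cannot expose visibility into any region other than those delimited by the constructed edges incident to $vv_i$ or $vv_{i+1}$. This follows because the only edges of $P'$ that disappear in $P''$ are $vv_i$ and $vv_{i+1}$ themselves, so any ray from $q$ whose first obstruction in $P'$ did not lie on one of these two edges encounters exactly the same first obstruction in $P''$; hence its visibility is unchanged. Consequently, the rays newly reaching farther points in $P''$ are exactly those whose first hit on $bd(P')$ lay on $vv_i$ or $vv_{i+1}$, i.e.\ the rays trapped between pairs of constructed edges stored with those two edges --- precisely the cones the algorithm sweeps. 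Combining this with the correctness of $visvert$-$inopencone$, Lemma~\ref{lem:contseqvisdel}, and the analogous treatment of constructed-vertex maintenance used in Lemma~\ref{lem:2b} closes the biconditional.
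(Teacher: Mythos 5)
Your proof is correct and follows essentially the same route as the paper's: it rests on Lemma~\ref{lem:contseqvisdel}, the correctness of $visvert$-$inopencone$ (Lemma~\ref{lem:visvertopenconecorr}), and the ray-shooting recomputation of the constructed vertices incident to $v_iv$ and $vv_{i+1}$. The only difference is that you spell out the soundness direction and the localization argument for case~(ii) --- that only rays whose first obstruction in $P'$ lay on $vv_i$ or $vv_{i+1}$ can change visibility status --- both of which the paper's brief proof leaves implicit.
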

\begin{proof}
Let $v_c$ and $v_{cc}$ be the vertices as defined in the description of the algorithm. 
The vertices between $v_c$ and $v_{cc}$ are added to $VP_{P'}(q)$.
These vertices were occluded prior to the deletion of $v$.
Along with these vertices, constructed vertices corresponding to these vertices are also included into $VP_{P'}(q)$.
Due to the Lemma~\ref{lem:contseqvisdel}, these are the only set of vertices of $VP_P(q)$ whose visibility from $q$ gets affected.
Let $V$ be the set comprising of constructed vertices that lie on $v_iv$, $v_{i+1}v$, together with endpoint of possible constructed edge that is incident to $v$.
Each vertex $v' \in V$ of $VP_{P'}(q)$ is replaced with a new vertex in $VP_{P''}(q)$ with a ray-shooting query; hence, the new constructed vertices are computed correctly.
\end{proof}

\begin{theorem}
Given a simple polygon $P$ with $n$ vertices and a fixed point $q \in P$, we build data structures of size $O(n)$ in $O(n)$ time to support the following.
Let $P'$ be the current simple polygon defined with $n'$ vertices.
Let $VP(q)$ be the visibility polygon of $q$ in $P'$. 
For a vertex $v$ inserted to (or, deleted from) $P'$, updating $VP(q)$ takes $O((k+1)(\lg{n'})^2)$ time, where $k$ is the number of updates required to the visibility polygon of $q$ due to the insertion (resp. deletion) of $v$ to (resp. from) $P'$.
\end{theorem}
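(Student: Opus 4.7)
The proof proceeds in three pieces: preprocessing, correctness of updates, and time analysis per update.

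\textbf{Preprocessing.} Starting from the initial simple polygon $P$ on $n$ vertices, we compute $VP(q)$ in $O(n)$ time by the algorithm of~\cite{journals/jal/ElGindyA81,journals/cvgip/Lee83,journals/bitnummath/joe87}. From the vertices of $P$ (in counterclockwise order) and those of $VP(q)$ (in counterclockwise order), we construct the circular lists $L_{P'}$, $L_{vp}$ and the balanced binary search trees $B_{P'}$, $B_{vp}$ described at the beginning of this section, together with the cross-pointers between leaves representing a common vertex and the per-edge array of constructed vertices. Finally we build the dynamic ray-shooting/point-location/shortest-distance data structure of Proposition~\ref{propo:rayshoot}, which in turn supports ray-rotating queries in $O((\lg{n'})^2)$ time via the reduction of~\cite{journals/comgeo/ChenW15a} (as explained in Section~\ref{sect:rayshootrayrotate}). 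All of these structures are built in $O(n)$ time and occupy $O(n)$ space.

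\textbf{Correctness.} We must show that after each vertex insertion (resp. deletion) the maintained structure still encodes $VP_{P''}(q)$. For insertion this is precisely Lemma~\ref{lem:2b}, whose argument relies on Lemma~\ref{lem:contseqblock}: the vertices of $VP_{P'}(q)$ that become hidden by the triangle $v_iv\,v_{i+1}$ form a contiguous arc along $bd(VP_{P'}(q))$, so the two ray-rotating queries between the predecessor $v_c$ and the successor $v_{cc}$ of the direction $\overrightarrow{qv}$ delimit the arc to be excised; every newly visible point is the endpoint of a constructed edge and is detected by a ray-shooting query along $\overrightarrow{qv}$ or along the directions of existing constructed edges incident to $v_iv_{i+1}$. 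For deletion, the analogous statement is Lemma~\ref{lem:2c}, relying on Lemma~\ref{lem:contseqvisdel} to guarantee that the set of newly visible vertices is again contiguous on $bd(VP(q))$, and hence is enumerated exhaustively by the two invocations of $visvert$-$inopencone$ (whose own correctness is Lemma~\ref{lem:visvertopenconecorr}). The only remaining issue is the relocation of constructed vertices formerly incident to the edges $v_iv$, $vv_{i+1}$ (insertion case) or $v_iv_{i+1}$ (deletion case); each is handled by a single ray-shooting query along the direction from the corresponding $P$-vertex $v_j$, as described in the algorithm.

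\textbf{Time per update.} Each update performs $O(1)$ ray-shooting queries (to classify $v$, to relocate a constant number of boundary constructed vertices, and to find constructed edges attached to newly added vertices of $VP(q)$) plus $O(k)$ ray-rotating queries issued by the scans around the affected arc or by $visvert$-$inopencone$. By Proposition~\ref{propo:rayshoot} and its extension to ray-rotation, each such query costs $O((\lg{n'})^2)$; the insertions and deletions into $L_{vp},L_{P'},B_{vp},B_{P'}$ and into the dynamic ray-shooting structure also cost $O((\lg{n'})^2)$ each, and there are $O(k+1)$ of them. Summing, the work per update is $O((k+1)(\lg{n'})^2)$. The additive $+1$ accounts for the unavoidable $O((\lg{n'})^2)$ spent updating the dynamic ray-shooting data structure even when $VP(q)$ is combinatorially unchanged ($k=0$). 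The main obstacle in the argument is precisely this bookkeeping: one must verify that every constant-work substep of the insertion/deletion procedures really can be implemented with $O(1)$ calls to the $O((\lg{n'})^2)$-time primitives and with $O(\lg{n'})$-time splits/merges of $B_{vp}$ and $B_{P'}$, so that the $k$-dependent term does not inflate beyond $O(k(\lg{n'})^2)$; this follows from the fact that the contiguous-arc lemmas (Lemmas~\ref{lem:contseqblock} and~\ref{lem:contseqvisdel}) localize all changes to a single contiguous stretch of each of $L_{vp}$ and $B_{vp}$.
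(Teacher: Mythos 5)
Your proposal is correct and follows essentially the same route as the paper: the paper's own proof simply cites Lemma~\ref{lem:2b} and Lemma~\ref{lem:2c} for correctness and defers the time/space bounds to the in-text analysis after each algorithm, which is exactly the material you have expanded (preprocessing via Proposition~\ref{propo:rayshoot}, contiguity via Lemmas~\ref{lem:contseqblock} and~\ref{lem:contseqvisdel}, and the $O(k+1)$ count of $O((\lg{n'})^2)$-time primitives). The only nitpick is that the constructed-edge ray-shooting queries for newly visible vertices number $O(k)$ rather than $O(1)$, but this is absorbed into the stated bound and does not affect the conclusion.
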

\begin{proof}
The correctness follows from Lemmas~\ref{lem:2b} and \ref{lem:2c}.
The time and space complexities are given immediately after describing the corresponding algorithms.
\end{proof}

\section{Querying for the visibility polygon of $q$ when $q \notin P'$}
\label{sect:maintvpsimppolyext}

In this Section, we devise an algorithm to compute the visibility polygon of any query point located in $\mathbb{R}^2$ as the simple polygon is updated with vertex insertions and deletions.

The algorithm presented in Section~\ref{sect:rayshootrayrotate} is extended to compute the visibility polygon of any query point $q$ belonging to the current simple polygon $P'$.
Hence, the following corollary to Theorem~\ref{thm:visvertopencone}.

\begin{cor}
\label{cor:vpqptintdyn}
Our algorithm preprocesses the given simple polygon $P$ in $O(n)$ time and computes $O(n)$ spaced data structures to facilitate in answering visibility polygon $VP(q)$ of any given query point $q$ located interior to the current simple polygon $P'$ in $O(k(\lg{n'})^2)$ time amid vertex insertions and deletions. 
Here, $k$ is the number of vertices of $VP(q)$, $n$ is the number of vertices of $P$, and $n'$ is the number of vertices of $P'$.
\end{cor}

There are two cases in considering the visibility of a point $q \notin P'$: (i) $q$ lies outside of $CH(P')$, and (ii) $q$ belongs to $CH(P') \setminus P'$.
We consider these cases independently.
The following observations from \cite{books/visalgo/skghosh2007} are useful in reducing these problems to computing the visibility polygon of a point interior to a simple polygon:
\begin{itemize}
\item[*]
Let $q \notin CH(P')$ and let $t', t''$ be points of tangencies from $q$ to $CH(P')$. 
Also, let $\overrightarrow{qt'}$ occur when $\overrightarrow{qt''}$ is rotated in counterclockwise direction with center at $q$. 
Then the region bounded by line segments $qt', qt''$ and the vertices that occur while traversing $bd(P')$ in counterclockwise direction from $t'$ to $t''$ is a simple polygon $P''$.
(Refer to Fig.~\ref{fig:simppolyextinsi}.)
\item[*]
Let $q \in CH(P') \setminus P'$.
For some edge $e = t't''$ of $CH(P')$ with $t'$ occurring before $t''$ in the clockwise ordering of vertices of $bd(P')$, the region formed by $e$ together with the section of boundary of $P'$ from $t'$ to $t''$ in clockwise ordering of edges along $bd(P')$ is a simple polygon $P''$ (a pocket) that contains $q$.  
(Refer to Fig.~\ref{fig:simppolyextinsii}.)
\end{itemize}
We say the simple polygon $P''$ resultant in either of these cases as the simple polygon corresponding to points $t'$ and $t''$. 

The changes to algorithms devised in the last Section are mentioned herewith.
In $O((\lg{n})^2)$ time, using the dynamic planar point-location query \cite{journals/jal/GoodrichT97}, we determine whether $q \in P'$.
If $q \notin P'$ and points of tangency from $q$ to $CH(P')$ does exist (resp. does not exist), then $q \notin CH(P')$ (resp. $q \in CH(P') \setminus P'$).
If tangents exist from $q$ to $CH(P')$, then we compute the tangents from $q$ to $CH(P')$ in $O(\lg{n})$ time (refer \cite{books/compgeom/prep1985}).

\begin{figure}[h]
\centering
\includegraphics[totalheight=1.15in]{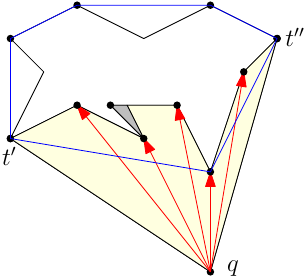}
\vspace{-0.15in}
\caption{\footnotesize Illustrating the exterior visibility of $P'$ from $q$ when $q \notin CH(P')$.}
\label{fig:simppolyextinsi}
\end{figure}

We dynamically maintain the convex hull $CH(P)$ of the current simple polygon $P'$ using the algorithm devised in Overmars et~al. \cite{journals/jcss/Overvan81}.
Whenever a vertex is inserted or deleted, we update the convex hull of the simple polygon.
To facilitate this, we preprocess $P$ defined with $n$ vertices to construct data structures of size $O(n)$ so that to update convex hull of the current simple polygon $P'$ in $O((\lg{n'})^2)$ time per vertex insertion or deletion using the algorithm from \cite{journals/jcss/Overvan81}.

Suppose that $q \notin CH(P')$. 
Let $t'$ and $t''$ be the points of tangency from $q$ to $CH(P')$.
Also, let $t'$ occur later than $t''$ in counterclockwise traversal of $bd(CH(P'))$.
We invoke $visvert$-$inopencone$ algorithm with $\overrightarrow{qt''}$ and $\overrightarrow{qt'}$ as the first and second parameters respectively.
Refer to Fig.~\ref{fig:simppolyextinsi}.
Let $P''$ be the simple polygon corresponding to points $t'$ and $t''$.
We do not explicitly compute $P''$ itself although the ray-shooting is limited to the interior of $P''$.
For every vertex $v'$ that is determined to be visible from $q$, we ray-shoot with ray $\overrightarrow{qv'}$ in $P''$ to determine the constructed edge that is incident to $v'$.

Suppose that $q \in CH(P') \setminus P'$.
First, we note that the ray-shooting algorithms given in \cite{journals/jal/GoodrichT97} work correctly within any one simple polygonal region of the planar subdivision.
And, same is the case with the ray-rotating algorithm from \cite{journals/comgeo/ChenW15a}.
To account for these constraints, as described below, we slightly modify the ray-rotating query algorithm invoked from $visvert$-$inopencone$.

\begin{figure}[h]
\centering
\includegraphics[totalheight=1.18in]{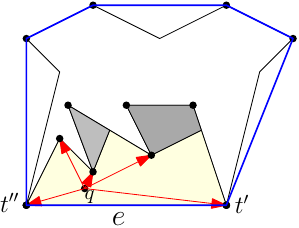}
\caption{\footnotesize Illustrating the exterior visibility of $P'$ from $q$ when $q \in CH(P') \setminus P'$.}
\label{fig:simppolyextinsii}
\end{figure}

During the invocation of the ray-shooting algorithm from the ray-rotating algorithm, if the ray-shooting algorithm determines that a ray $r$ with origin $q$ does not strike any point of $P'$, then in $O(\lg{n'})$ time we compute the edge $e$ of $CH(P')$ that gets struck by $r$.
The edge $e$ is found by searching in the dynamic hull tree corresponding to $CH(P')$ (Overmars et~al. \cite{journals/jcss/Overvan81}).
Let $t', t''$ be the endpoints of $e$.
Also, let $t'$ occurs before $t''$ in the clockwise ordering of vertices of $bd(P')$.
The closed region bounded by $e$ and the section of $bd(P')$ that occurs between $t'$ and $t''$ while traversing $bd(P')$ in clockwise direction starting from $t'$, is a simple polygon, say $P''$.
Refer to Fig.~\ref{fig:simppolyextinsii}.
Whenever we encounter an edge $e$ with these characteristics, we invoke the $visvert$-$inopencone$ algorithm with $qt'$ and $qt''$ as the first and second parameters respectively.
The rest of the ray-rotating algorithm from \cite{journals/comgeo/ChenW15a} is not changed.

The $visvert$-$inopencone$ algorithm performs $O(k)$ ray-rotation queries to output $k$ vertices that are visible from $q$. 
Since each ray-rotation query takes $O((\lg{n'})^2)$ time, the time complexity is $O(k(\lg{n'})^2)$.

\begin{lemma}
\label{lem:3c}
Let $P'$ be the current simple polygon.
Let $q$ be a point exterior to $P'$.
The visibility polygon $VP(q)$ of $q$ is updated correctly whenever a vertex $v$ is inserted to $P'$ or a vertex $v$ is deleted from $P'$.
\end{lemma}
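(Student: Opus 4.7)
The plan is to reduce correctness to the already-established correctness of $visvert$-$inopencone$ (Lemma~\ref{lem:visvertopenconecorr}) together with correctness of the underlying dynamic structures. Because the exterior query algorithm recomputes $VP(q)$ from scratch at query time rather than maintaining it incrementally, it suffices to establish two things: first, that the ray-shooting, point-location, two-point-distance, and convex-hull data structures continue to represent the current simple polygon $P'$ faithfully after each vertex insertion or deletion; second, that when these structures are in their updated state the query procedure returns precisely the set of points of $P'$ visible from $q$. The first point is immediate from Proposition~\ref{propo:rayshoot} together with \cite{journals/jcss/Overvan81}.

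For the second point I would treat the two cases of the section separately. When $q \notin CH(P')$, let $t', t''$ be the tangent points from $q$ to $CH(P')$ chosen as described, and let $P''$ be the simple polygon associated with $t'$ and $t''$. The key observation is that a point of $P'$ is visible from $q$ in the plane if and only if it is visible from $q$ regarded as an interior point of $P''$; hence the invocation $visvert$-$inopencone(\overrightarrow{qt''},\overrightarrow{qt'})$ with ray-shooting restricted to $P''$ returns exactly the vertices of $P'$ visible from $q$, by Lemma~\ref{lem:visvertopenconecorr}. The follow-up ray-shoots then recover the correct constructed edges, and the angular sort supplies the correct cyclic order around $q$.

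The harder case, which I expect to be the main obstacle, is $q \in CH(P') \setminus P'$, because a ray rotated about $q$ can exit $P'$ through a convex-hull edge instead of striking $bd(P')$ and the ray-rotation query of \cite{journals/comgeo/ChenW15a} is undefined in that situation. Here I would argue that the exterior visibility of $P'$ from $q$ decomposes along the pockets of $CH(P') \setminus P'$ that meet the visibility cone of $q$: each such pocket, closed by its bounding hull edge, is a simple polygon containing $q$, and the vertices of $P'$ visible from $q$ are the disjoint union, over pockets, of the vertices visible from $q$ regarded as an interior point of that pocket. The modification described in the section realises this decomposition on the fly: whenever the embedded ray-shoot fails to hit $bd(P')$, the hull tree of \cite{journals/jcss/Overvan81} locates in $O(\lg n')$ time the hull edge $t't''$ intercepting the ray, and $visvert$-$inopencone$ is re-invoked on $\overrightarrow{qt'},\overrightarrow{qt''}$ restricted to the associated pocket. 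An induction on the number of pockets entered, with Lemma~\ref{lem:visvertopenconecorr} used inside each pocket as the base step, then shows that every visible vertex of $P'$ is output and no invisible vertex is output. The subtle point I would write most carefully is the no-double-counting claim when the cone at $q$ grazes a hull vertex shared by two adjacent pockets; I would use the fact that $visvert$-$inopencone$ splits its input cone via two interior rays $r_1', r_2'$ and pairs these with boundary ray-shoots so that vertices lying on the cone boundary are handled exactly once. Combined with the faithful update of the dynamic structures, this establishes that $VP(q)$ is reported correctly after each insertion or deletion of a vertex of $P'$.
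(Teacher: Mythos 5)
Your proposal is correct and follows essentially the same route as the paper: reduce both exterior cases to an interior visibility computation handled by Lemma~\ref{lem:visvertopenconecorr} --- via the tangent cone when $q \notin CH(P')$, and via the pocket closed off by its convex-hull edge when $q \in CH(P') \setminus P'$ --- with Proposition~\ref{propo:rayshoot} and the dynamic hull of \cite{journals/jcss/Overvan81} guaranteeing that the underlying structures remain faithful after each insertion or deletion. The one divergence is that your multi-pocket induction (and the associated double-counting worry) in the second case is unnecessary: since $q$ and every vertex of $P'$ lie in the convex set $CH(P')$, any segment from $q$ to a vertex of $P'$ stays inside $CH(P')$ and therefore cannot exit the pocket containing $q$ through its hull edge, so it can only be blocked by the arc of $bd(P')$ bounding that single pocket --- which is exactly the one-pocket observation the paper's proof relies on.
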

\begin{proof}
When $q \notin CH(P')$, we find the points of tangency, $t'$ and $t''$, from $q$. 
Any vertex $x \in P'$ that does not lie in the $cone(\overrightarrow{qt''}, \overrightarrow{qt'})$ is not visible from $q$ as the line segment $qx$ is intersected by one of the edges of $P''$.
Hence, we invoke $visvert$-$inopencone$ algorithm in this cone to determine the vertices of $VP(q)$ that are visible from $q$. 
When $q \in CH(P') \setminus P'$, since $q \notin P'$, $q$ lies in one of the pockets formed by vertices of $P'$ and $CH(P')$. 
We identify the pocket $R$ in which $q$ resides by finding vertices $t'$ and $t''$.
This is accomplished by traversing the dynamic hull tree to find the edge $t't''$ of $CH(P') \setminus P'$ that belongs to pocket $R$.
Any vertex $x \in P'$ that does not belong to sequence of vertices that occur in traversing $bd(P')$ from $t'$ to $t''$ in clockwise direction is not visible from $q$ as the line segment $qx$ is intersected by an edge of $P'$. 
Hence, it suffices to invoke $visvert$-$inopencone$ algorithm with respect to the $opencone(qt', qt'')$. 
Further, for every updated vertex, its corresponding constructed vertex is computed correctly with a ray-shooting query.
\end{proof}

\begin{lemma}
\label{lem:vpqptextdyn}
By preprocessing the given initial simple polygon $P$ defined with $n$ vertices, data structures of size $O(n)$ are computed. 
These facilitate in adding or deleting any vertex from the current simple polygon $P'$ in $O((\lg{n'})^2)$ time, and to output the visibility polygon $VP(q)$ of any query point $q$ located exterior to $P'$ in $O(k(\lg{n'})^2)$ time.
Here, $k$ is the number of vertices of $VP(q)$, and $n'$ is the number of vertices of $P'$. 
\end{lemma}
\begin{proof}
The correctness follows from Lemma~\ref{lem:3c}.
The preprocessing time of dynamic hull algorithm of Overmars et~al. \cite{journals/jcss/Overvan81} is $O(n)$ and it takes $O((\lg{n'})^2)$ time to update the hull tree per an insertion or deletion of a vertex of the current simple polygon $P'$.
Further, it takes $O((\lg{n'})^2)$ time to search in the current hull tree to find the edge $t't''$ of $CH(P')$.
The rest of the time and space complexities are argued above.
\end{proof}

\noindent
From Corollary~\ref{cor:vpqptintdyn} and Lemma~\ref{lem:vpqptextdyn}, the following theorem is immediate.

\begin{theorem}
By preprocessing the given initial simple polygon $P$ defined with $n$ vertices, data structures of size $O(n)$ are computed. 
These facilitate in adding or deleting any vertex from the current simple polygon $P'$ in $O((\lg{n'})^2)$ time, and to output the visibility polygon $VP(q)$ of any query point $q \in \mathbb{R}^2$ in $O(k(\lg{n'})^2)$ time.
Here, $k$ is the number of vertices of $VP(q)$, and $n'$ is the number of vertices of $P'$. 
\end{theorem}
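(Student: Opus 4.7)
The plan is to assemble the final theorem from three ingredients already in place: the correctness statement of Lemma~\ref{lem:3c}, the preprocessing/update bounds of the underlying dynamic substructures, and a charging argument that bounds the query-time by $O(k)$ ray-rotating queries. I would open the proof by pointing out that correctness of each query answer is exactly what Lemma~\ref{lem:3c} delivers in both the $q \notin CH(P')$ and $q \in CH(P') \setminus P'$ sub-cases, so only the resource bounds remain.

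For preprocessing and space, I would enumerate the data structures maintained on $P$: (i) the Goodrich--Tamassia fully-dynamic planar subdivision structure from Proposition~\ref{propo:rayshoot}, supporting point-location, ray-shooting and shortest-distance queries; (ii) the ray-rotating structure of Proposition~\ref{propo:rayrot}, which is layered on top of (i) (so that its own update bound degrades to $O((\lg n')^2)$ as noted in Section~\ref{sect:rayshootrayrotate}); and (iii) the dynamic hull tree of Overmars--van~Leeuwen for $CH(P')$. Each of these takes $O(n)$ preprocessing time and $O(n)$ space on the initial polygon $P$, and each supports vertex insertion/deletion in $O((\lg n')^2)$ time (the hull-tree bound is $O((\lg n')^2)$, the Goodrich--Tamassia bound is $O((\lg n')^2)$ by Proposition~\ref{propo:rayshoot}, and the ray-rotating structure piggy-backs on the first). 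Summing over the three structures leaves the per-update cost at $O((\lg n')^2)$, as claimed.

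For the query bound I would argue as follows. Given $q \in \mathbb{R}^2$, a single dynamic point-location query decides whether $q \in P'$ in $O((\lg n')^2)$ time. If $q$ is exterior, the two tangents to $CH(P')$ (or the pocket edge $t't''$ in the $CH(P') \setminus P'$ case) are located in $O(\lg n')$ time in the hull tree. We then invoke $visvert$-$inopencone$ on the appropriate opencone; as analysed in Section~\ref{sect:rayshootrayrotate}, this invocation issues $O(k)$ ray-rotating queries and $O(k)$ ray-shooting queries to recover the $k$ visible vertices of $VP(q)$ together with their constructed edges. Each such query now costs $O((\lg n')^2)$ because the supporting structure is the dynamic one of Proposition~\ref{propo:rayshoot}. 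Multiplying gives the $O(k(\lg n')^2)$ total, which absorbs the $O(\lg n')$ tangent/pocket lookup and the single point-location.

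The main subtlety --- and the only place I expect to linger --- is handling the sub-case $q \in CH(P') \setminus P'$ correctly within the recursion of $visvert$-$inopencone$. There, a ray shot from $q$ may cross the exterior before re-entering $P'$, which violates the precondition that the underlying ray-rotating algorithm operates inside a single simple polygon. I would justify that the modification described just before Lemma~\ref{lem:3c} (detecting, via the hull tree in $O(\lg n')$ time, the pocket edge struck by the ray and recursively invoking $visvert$-$inopencone$ on the corresponding sub-polygon) only adds $O(\lg n')$ work per encountered pocket and that each pocket is charged to a distinct visible vertex of $VP(q)$, so the total remains $O(k(\lg n')^2)$. Combining the correctness from Lemma~\ref{lem:3c} with these resource bounds yields the theorem.
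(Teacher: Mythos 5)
Your proposal is correct and follows essentially the same route as the paper: the paper's own proof is a two-line deferral to Lemma~\ref{lem:3c} for correctness and to the immediately preceding discussion for the resource bounds, and your write-up is a faithful (and somewhat more explicit) expansion of exactly that material, including the $O(k)$ ray-rotating-query count for $visvert$-$inopencone$ and the pocket-handling modification. The only genuinely new touch is your charging of each encountered pocket edge to a distinct visible vertex, which the paper leaves implicit but which is a reasonable way to make the $O(k(\lg n')^2)$ bound airtight.
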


\section{Maintaining the weak visibility polygon of a fixed line segment amid vertex insertions}
\label{sect:wvpsimppolyincr}

In this section, we devise an algorithm to maintain the weak visibility polygon of a line segment located interior to the given simple polygon, as the vertices are added to the simple polygon.

When $pq$ is a line segment contained within a simple polygon $Q$, we can partition $Q$ into two simple polygons $Q_1$ and $Q_2$ by extending segment $pq$ on both sides until it hits $bd(Q)$. 
Then, the $WVP$ of $pq$ in $Q$ is the union of the $WVP$ of $pq$ in $Q_1$ and the $WVP$ of $pq$ in $Q_2$. 
To update $WVP$ of $pq$ in $Q$, we need to update the $WVP$s of $pq$ in both the simple polygons $Q_1$ and $Q_2$.
So to maintain the $WVP$ of a line segment in a simple polygon $Q$ is reduced to computing the $WVP$s of an edge of two simple polygons $Q_1$ and $Q_2$.
Hence, in this Section, we consider the problem of updating the weak visibility polygon $WVP(pq)$ of an edge $pq$ of a simple polygon.
With a slight abuse of notation, we denote $WVP(pq)$ with $WVP$ when $pq$ is clear from the context.

We review the algorithm by Guibas et~al. \cite{journals/algorithmica/GuibasHLST87}, which is used in computing the initial $WVP(pq)$ in simple polygon $P$.
First, the \emph{shortest path tree} rooted at $p$, denoted by $SPT(p)$, is computed: note that $SPT(p)$ is the union of shortest path from $p$ to every vertex $v_i \in P$.
Then the $SPT(q)$ is computed. 
As part of depth first traversal of $SPT(p)$ (resp. $SPT(q)$), if the shortest path to a child $v_j$ of $v_i$ makes a right turn (resp. left turn) at $v_i$, then a line segment is constructed by extending $v_kv_i$ to intersect $bd(P)$, where $v_k$ is the parent of $v_i$ in $SPT(p)$ (resp. $SPT(q)$).
The portion of $P$ lying on the right side (resp. left side) of line segments such as $v_kv_i$ in $SPT(p)$ (resp. $SPT(q)$) do not belong to $WVP(pq)$.
It was shown in \cite{journals/algorithmica/GuibasHLST87} that every point belonging to $P$ but not belonging to pruned regions of $P$ is weakly visible from $pq$.
From the algorithm of Guibas et~al.~\cite{journals/algorithmica/GuibasHLST87}, as a new vertex is added to $P$, it is apparent that to update the $WVP(pq)$, we need to update $SPT(p)$ and $SPT(q)$.
This is to determine the vertices of $P$ that do not belong to $WVP(pq)$ using these shortest path trees. 

For updating $SPT(p)$ (as well as $SPT(q)$) with the insertion of new vertices to $P$, we use the algorithm by Kapoor and Singh \cite{conf/fsttcs/KapoorS96}.
Their algorithm to update $SPT(p)$ is divided into three phases: the first phase computes every line segment $e_i$ of $SPT(p)$ that intersects with $\triangle vv_iv_{i+1}$; in the second phase, for each such $e_i$, shortest path tree $SPT(p)$ is updated to include the endpoint $z_i$ of $e_i$;  and in the final phase, the algorithm updates $SPT(p)$ to include every child of all such $z_i$. 
As a result, when a vertex $v$ is inserted to current simple polygon $P'$ defined with $n'$ vertices, the algorithm in \cite{conf/fsttcs/KapoorS96} updates $SPT(p)$ in $O(k\lg{(\frac{n'}{k})})$ time in the worst-case.
Here, $k$ is the number of changes made to $SPT(p)$ due to the insertion of $v$ to $P'$.

A \emph{funnel} consists of a vertex $r$, called the \emph{root}, and a line segment $xy$, called the \emph{base} of the funnel.
The \emph{sides} of the funnel are $SP(r,x)$ and $SP(r,y)$.
In \cite{conf/fsttcs/KapoorS96}, $SPT(p)$ is stored as a set of funnels.
For a funnel $F$ with root $r$ and base $xy$, the line segments from $r$ to $x$ and from $r$ to $y$ are stored in balanced binary search trees $T_1$ and $T_2$ respectively. 
The root nodes of $T_1$ and $T_2$ have pointers between them. 
It can be seen that every edge $e$ that is not a boundary edge lies in exactly two funnels say $F_i$ and $F_j$. 
The node in $F_i$ that corresponds to $e$ has a pointer to the node in $F_j$ that corresponds to $e$, and vice versa.
The $SPT(p)$ stored as funnels takes $O(n)$ space, where $n$ is the number of vertices of a simple polygon.

Similar to Section~\ref{sect:maintvpsimppolyint}, vertices of $P$ and $WVP$ are stored in balanced binary search trees $B_P$ and $B_{WVP}$.
Additionally, nodes in $B_{P}$ and $B_{WVP}$ which correspond to the same vertex $u_i \in WVP$ contain pointers to refer between themselves. 
For each region $R_i$, vertices are stored in a balanced binary search tree $T_{R_i}$.
And, for every node $v_j \in R_i$, node in $T_{R_i}$ (resp. $B_P$) that correspond to $v_j$ contain a pointer to a node in $B_P$ (resp. $T_{R_i}$) that correspond to $v_j$.
We call the pointer from a node in $B_P$ to a node in $T_{R_i}$ a {\it tree pointer}.
For any vertex $v_i \notin WVP$, the tree pointer of $v_i$ is set to null.
All of these data structures, together with the ones to store funnels, take $O(n)$ space.

In our incremental algorithm, whenever a vertex is inserted, we first update $SPT(p)$ and $SPT(q)$.
Then, with the depth-first traversal starting from $v$ in the updated $SPT(p)$ and $SPT(q)$, we remove the regions that are not entirely visible from $pq$. 
The tree pointers are used to identify which of the cases (and sub-cases) are applicable in a given context while the algorithm splits or joins regions' as it proceeds. 

Since we maintain the $WVP$ of edge $pq$, we assume that no vertex is inserted between vertices $p$ and $q$.

\vspace{-0.1in}

\subsubsection*{Preprocessing\\}

As mentioned above, we compute $SPT(p)$ and $SPT(q)$ using \cite{journals/algorithmica/GuibasHLST87} and store them as funnels using the algorithm from \cite{conf/fsttcs/KapoorS96}. 
We also compute $WVP$ and the balanced binary search trees for each region.
The preprocessing phase takes $O(n)$ time in the worst-case.

\vspace{-0.1in}

\subsubsection*{Updating $WVP(pq)$\\}

Let $P'$ be the current simple polygon.
Let $v$ be the vertex inserted between vertices $v_i$ and $v_{i+1}$ of $P'$.
Also, let $pq$ be the edge whose $WVP$ is being maintained. 

\begin{figure}[h]
\centering
\includegraphics[totalheight=1.1in]{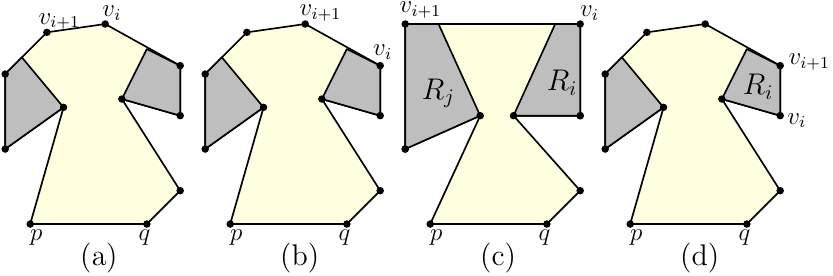}
\vspace{-0.1in}
\caption{Illustrating cases based upon the locations of $v_i$ and $v_{i+1}$.}
\label{fig:11}
\end{figure}

We have the following four cases depending on the relative location of $v_i$ and $v_{i+1}$ with respective to any two disjoint occluded regions, say $R_i$ and $R_j$, and the weak visibility polygon $WVP$ of $pq$:
\begin{enumerate}[(a)]
\item $v_i, v_{i+1} \in WVP$ \hspace{0.03in} (Fig.~\ref{fig:11}(a))
\item $v_i \notin WVP$ and $v_{i+1} \in WVP$, or vice versa \hspace{0.03in} (Fig.~\ref{fig:11}(b))
\item $v_i,v_{i+1} \notin WVP$ and $v_i\in R_i, v_{i+1}\in R_j$ \hspace{0.03in} (Fig.~\ref{fig:11}(c))
\item $v_i,v_{i+1} \notin WVP$ and both $v_i, v_{i+1}\in R_i$ \hspace{0.03in} (Fig.~\ref{fig:11}(d))
\end{enumerate}

Further, depending on the location of the newly inserted vertex $v$, there are two sub-cases in each of these cases: (I) $v \in P'$, (II) $v \notin P'$.
We first consider the cases (a)-(d) when $v \in P'$.
We make use of join, split and insert operations of the balanced binary search trees described in Tarjan \cite{books/dsnetworkalgo/tarjan1983}, each of these operations takes $O(\lg{n})$ time.
Identifying these cases is explained later.
We use $k_1$ and $k_2$ to denote the number of changes required to update $SPT(p)$ and $SPT(q)$ respectively with the insertion of new vertex $v$.\\

\begin{figure}[h]
\centering
\includegraphics[totalheight=1.3in]{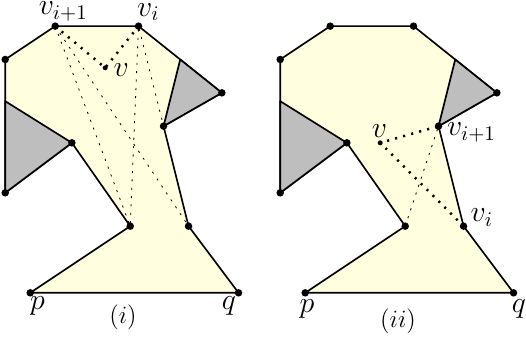}
\vspace{-0.1in}
\caption{Case (a): (i) $SP(p,v_i)$ and $SP(p,v_{i+1})$ do not intersect $\triangle vv_iv_{i+1}$ and (ii) $SP(p,v_{i+1})$ intersects $\triangle vv_iv_{i+1}$}
\label{fig:12}
\end{figure}

\noindent
{\bf Case (a)} \\ 
(i) When $SP(p,v_i)$ and $SP(p,v_{i+1})$ do not intersect $\triangle vv_iv_{i+1}$, (Fig. \ref{fig:12}(i)), we add $v$ to $SPT(p)$ by finding a tangent from $v$ to funnel with base $v_iv_{i+1}$ in $SPT(p)$. 
Analogously, if $SP(q,v_i)$ and $SP(q,v_{i+1})$ do not intersect $\triangle vv_iv_{i+1}$, then $v$ is added to $SPT(q)$ as well. 
And, $v$ is added to $WVP(pq)$.
No further changes are required.
(ii) We now consider the case when $SP(p,v_{i+1})$ intersects $\triangle vv_iv_{i+1}$ (Fig. \ref{fig:12}(ii)).
The handling of this case is detailed below. 

\begin{figure}[h]
\centering
\includegraphics[totalheight=1.2in]{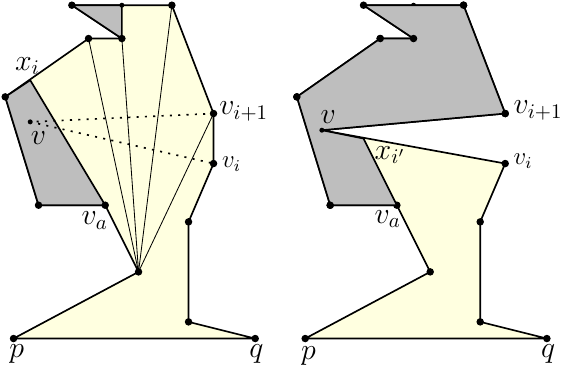}
\vspace{-0.1in}
\caption{Illustrating Case (a)(ii)(1): $v \in R_i$}
\label{fig:13}
\end{figure}

\noindent
Case (1):
Suppose that the vertex $v$ lies in a region $R_i$ with associated constructed edge $v_ax_i$ (Fig. \ref{fig:13}). 
While updating $SPT(p)$ using \cite{conf/fsttcs/KapoorS96}, we check if one of the edges in $SPT(p)$ intersecting $\triangle vv_iv_{i+1}$ is $v_ax_i$.
If it is, all the vertices on $bd(P'')$ from $v_{i+1}$ to $x_i$ are joined into a single occluded region and the vertices on $bd(WVP)$ from $v_{i+1}$ to $x_i$ are deleted from $WVP$. 
This is because $\triangle vv_iv_{i+1}$ blocks the visibility of these vertices from $pq$.
And, this is accomplished with the union of all the regions $R_j$ that are encountered while traversing along $bd(WVP)$ from $v_{i+1}$. 
If there are $O(k_1)$ such regions, then joining the new region with $R_i$ takes $O(k_1\lg{n})$ time. 
The point of intersection $x_{i'}$ of $vv_i$ and $v_ax_i$ is added as a constructed vertex to both $WVP$ and $SPT(p)$, and $R_i$ is associated with $x_{i'}$. 
Since $v$ is located in an occluded region $R_i$, from the algorithm in \cite{journals/cvgip/LeeL86}, we know that $v_ax_i$ belongs to either $SPT(p)$ or $SPT(q)$ of $P'$.
Hence, if $v_ax_i \notin SPT(p)$ then $v_ax_i \in SPT(q)$.
In the latter case, we do the analogous updates to $WVP$ while updating $SPT(q)$. \\

\begin{figure}[h]
\centering
\includegraphics[totalheight=1.3in]{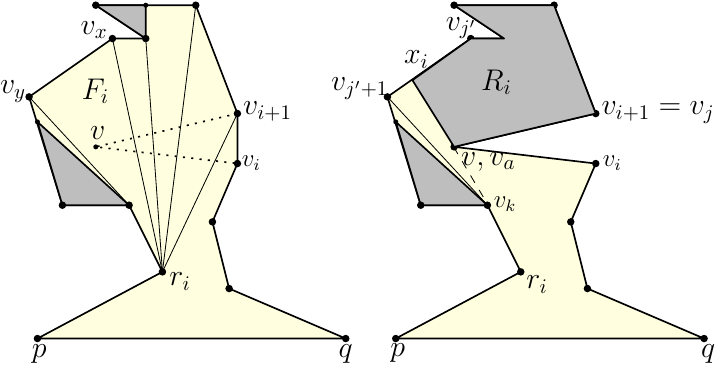}
\vspace{-0.1in}
\caption{Illustrating Case (a)(ii)(2): $v \in WVP$}
\label{fig:14}
\end{figure}

\noindent
Case (2):
If $v \in WVP$, then no constructed edge intersects $\triangle vv_iv_{i+1}$ (refer to Fig.~\ref{fig:14}). 
We start with updating $SPT(p)$. 
Let $F_i$ be the funnel containing $v$ with root $r_i$ and base $v_xv_y$.
We then perform the depth first traversal in $SPT(p)$, starting from vertex $r_i$ while traversing those simple paths in $SPT(p)$ that intersect $\triangle vv_iv_{i+1}$. 
For any vertex $v_j$ with parent $v_a$ in $SPT(p)$, if $SP(r_i, v_j)$ takes a right turn at vertex $v_a$ then  as described herewith we add a constructed edge $v_ax_i$.
This is accomplished by finding the leftmost child $v_{j'}$ of $v_a$ for which $SP(v,v_{j'})$ takes a right turn at $v_a$.
Let $v_k$ be the parent of $v_a$ in $SPT(p)$, then the constructed vertex $x_i$ is found by extending $v_kv_a$ on to $v_{j'}v_{j'+1}$. 
This results in a new region $R_i$.
(Refer to right subfigure of Fig.~\ref{fig:14}.)
We add to $R_i$ every vertex that occur in traversing $bd(WVP)$ in counterclockwise direction from vertex $v_{j} (= v_{i+1})$ to vertex $v_a$ and delete the same set of vertices from $WVP$.
As part of this, every region $R_j$ that is encountered in this traversal is joined with $R_i$. 
If there are $O(k_1)$ such regions, this step takes $O(k_1\lg{n})$ time.
Finally, the constructed vertex $x_i$ is added to $WVP$ and $SPT(p)$.
To update $SPT(q)$, we find every edge $e_i$ in $SPT(q)$ that intersect $\triangle vv_iv_{i+1}$. 
If the hidden endpoint $z_i$ of an edge $e_i$ has a non-null tree pointer (it lies in some region $R_j$), then $z_i$ is deleted along with its children. 
Similar to the depth-first traversal in $SPT(p)$ above, we perform the depth-first traversal of $SPT(q)$ while considering subtrees that contain vertices not determined to be not weakly visible from $pq$ during the depth-first traversal of $SPT(q)$.
This completes the update procedure.
Note that whole of this algorithm takes $O(k_2\lg{n})$ time in the worst-case, where $k_2$ is the number of changes required to update $SPT(q)$ with the insertion of vertex $v$.
When $SP(q,v_i)$ intersects $\triangle vv_iv_{i+1}$, we first update $SPT(q)$ and steps analogous to above are performed. 
Including updating both of $SPT(p)$ and $SPT(q)$, our algorithm takes $O((k_1+k_2)\lg{n})$ time in the worst-case.\\

\noindent
{\bf Case (b)} \\
We assume that $v_i \notin WVP$ and $v_{i+1} \in WVP$. Let $v_i \in R_i$ with the constructed vertex $x_i$ and the constructed edge $v_ax_i$. 
If $vv_{i+1}$ intersects $v_ax_i$ at $x_{i'}$ (Fig. \ref{fig:15}(i)), 
then $v \in R_i$.

\begin{figure}[h]
\centering
\includegraphics[totalheight=1.3in]{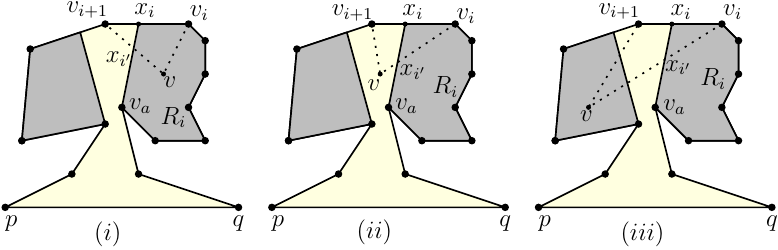}
\vspace{-0.1in}
\caption{Illustrating Case (b): (i) $v \in R_i$; (ii) $v \in WVP$; (iii) $v \notin R_i \cup WVP$}
\label{fig:15}
\end{figure}

We update $WVP$ by deleting $x_i$ and adding $x_{i'}$. 
Similarly, we update $SPT(p)$ and $SPT(q)$ by deleting $x_i$ and adding $x_{i'}$. 
Finally, we add $v$ to $R_i$. 
Overall, including updating $SPT(p)$ and $SPT(q)$, handling this sub-case takes $O(\lg{n})$ time.

When $v \notin R_i$ (Figs. \ref{fig:15}(ii), \ref{fig:15}(iii)), we find the point of intersection $x_{i'}$ of line segments $vv_i$ and $v_ax_i$.
We update $WVP$ by deleting $x_i$ and adding $x_{i'}$.
Similarly, we update $SPT(p)$ and $SPT(q)$ by deleting $x_i$ and adding $x_{i'}$.
This case is then similar to case (a) where $v_i$ is replaced with $x_{i'}$; 
hence, takes $O((k_1+k_2)\lg{n})$ time.\\

\begin{figure}[h]
\centering
\includegraphics[totalheight=1.3in]{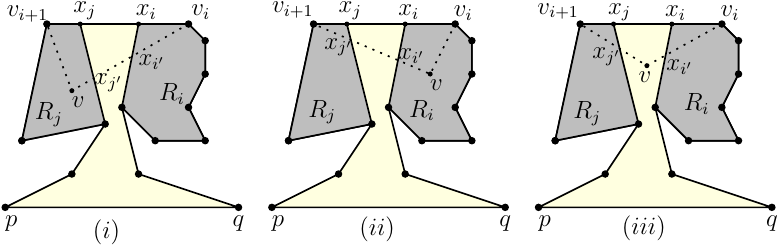}
\vspace{-0.1in}
\caption{Illustrating Case (c): (i) $v \in R_j$, (ii) $v\in R_i$, and (iii) $v \in WVP$}
\label{fig:16}
\end{figure}

\noindent
{\bf Case (c)} \\
Let $v_i \in R_i$ and let $v_{i+1} \in R_j$.
Also, let $x_i$ and $x_j$ be the respective constructed vertices on $v_iv_{i+1}$.
Then we have three sub-cases, which are shown in Fig.~\ref{fig:16}.
We note that there are no other sub-cases, as $v$ cannot be in any occluded region other than $R_i$ and $R_j$. 
(Whenever this is the case, either the line segment $vv_i$ or the line segment $vv_{i+1}$ intersects with the boundary of the simple polygon.)
Determining and handling any of these cases is similar to case (b).
However, here we additionally update $SPT(p)$ and $SPT(q)$ by adding the new constructed vertices (intersection points $x_{i'}$ and $x_{j'}$ of $\triangle vv_iv_{i+1}$ with the constructed edges).
This additional step takes $O(\lg{n})$ time.\\

\noindent
{\bf Case (d)} \\
Now we have $v_i, v_{i+1} \in R_i$.
Let $x_i$ be the associated constructed vertex of $R_i$ and $v_ax_i$ be the corresponding constructed edge.
If $v \in R_i$ (Fig. \ref{fig:17}(i)) then no change is made to $WVP$ but we need to add $v$ to $T_{R_i}$  which takes $O(\lg{n})$ time.

\begin{figure}[h]
\centering
\includegraphics[totalheight=1.3in]{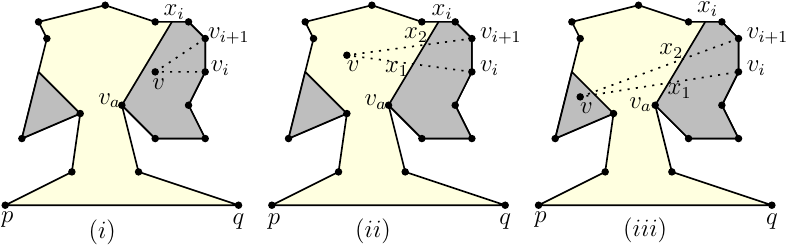}
\vspace{-0.1in}
\caption{Illustrating Case (d): (i) $v \in R_i$, (ii) $v \in WVP$, and (iii) $v \notin R_i \cup WVP$}
\label{fig:17}
\end{figure}

If $v \notin R_i$ (Figs. \ref{fig:17}(ii), \ref{fig:17}(iii)), 
we assume a portion of $vv_i$ is visible from $pq$. 
(When a portion of $vv_{i+1}$ is visible from $pq$, it is handled analogously.)
Then the polygonal region $R_i$ is split into two regions.
\begin{wrapfigure}{r}{0.26\textwidth}
\centering
\begin{minipage}[t]{\linewidth}
\centering
\includegraphics[totalheight=1.3in]{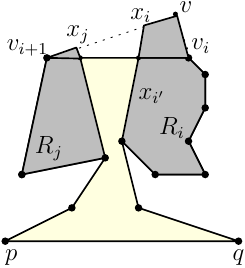}
\vspace{-0.1in}
\caption{Illustrating Case (II): when $v \notin P'$}
\label{fig:17c}
\end{minipage}
\end{wrapfigure}
Hence, we split the tree $T_{R_i}$ as well. 
This results in two balanced binary search trees:
tree $T_{R_{i1}}$ comprising vertices that occur in traversing $bd(P'')$ from $v_i$ to $v_a$ in clockwise order;
and, tree $T_{R_{i2}}$ comprising vertices that occur in traversing $bd(P'')$ from $v_{i+1}$ to $x_i$ in counterclockwise order.
Let $vv_i$ and $vv_{i+1}$ respectively intersect $v_ax_i$ at points $x_1$ and $x_2$ respectively. 
We add $x_1$ and $x_2$ to $SPT(p)$ and $SPT(q)$.
Also, we associate $x_1$ as a constructed vertex with $T_{R_{i1}}$.
This case is then similar to case (a) with $\triangle vx_1x_2$ except that in the last step, the tree $T_{R_{i2}}$ is joined with a newly formed tree.  
This completes the case (d).
And, this case require $O((k_1+k_2)\lg{n})$ time.

In Case (II), i.e., when $v \notin P'$, the resultant cases are each of the four cases (a) to (d) can be handled analogously. 
If one/two sections of $v_iv_{i+1}$ are not weakly visible from $pq$, then we extend the respective constructed edges that are incident to $v_iv_{i+1}$ into $\triangle$ $vv_iv_{i+1}$.
(Refer to Fig.~\ref{fig:17c}.)
We add the vertex $v$ to $SPT(p)$ and $SPT(q)$.
Further, the tree comprising vertices that occur in traversing the boundary of each modified occluded region is updated by adding the constructed vertices and possibly vertex $v$. 
And, each of these cases again take $O((k_1+k_2)\lg{n})$ time in the worst-case.
\\

\noindent {\bf Distinguishing cases (a)-(d)}  \\
To determine the appropriate case among cases (a)-(d), we follow the tree pointers associated with $v_i$ and $v_{i+1}$ in $L_P$. 
For $v_i$, if the tree pointer is not null, then it points to a node in a tree, say $T_{R_b}$. 
We follow the parent pointers from this node to reach the root of $T_{R_b}$, say $r_1$, which takes $O(\lg{n})$ time. 
Similarly, $r_2$ is found if the tree pointer for $v_{i+1}$ is not null. 
The appropriate case can thus be found using these tree pointers.
Based on the angle between $v_iv$ and $vv_{i+1}$, we determine whether $v \in P'$ or $v \notin P'$.

\begin{lemma}
\label{lem:wvpincr1}
The cases (a)-(d) are exhaustive. 
Further, the corresponding sub-cases are exhaustive as well.
\end{lemma} 
\noindent
\begin{proof}
The cases (a)-(d) are defined based on the occluded regions of $P'$ to which vertices $v_i$ and $v_{i+1}$ belong to: 
(1) both $v_i$ and $v_{i+1}$ belong to WVP, (2) only one of them belongs to WVP, and (3) neither of $v_i, v_{i+1}$ belongs to WVP.
Note that (1) is same as the case (a) in our algorithm and (2) is same as the case (b).
Since there are multiple occluded regions, case (3) can further be divided into two sub-cases: (i)$v_i$ and $v_{i+1}$ belong to two distinct occluded regions, (ii) both $v_i$ and $v_{i+1}$ belong to the same region.
Note that (3)(i) is same as the case (c) of our algorithm and (3)(ii) is same as the case (d).

The sub-cases of case (a) are exhaustive and are considered in the algorithm:
$SP(p, v_i)$ (resp. $SP(p, v_{i+1})$, $SP(q, v_i)$, $SP(q, v_{i+1})$) do not intersect $\triangle vv_iv_{i+1}$; $SP(p, v_i)$ (resp. $SP(p, v_{i+1})$, $SP(q, v_i)$, $SP(q, v_{i+1})$) intersects $\triangle vv_iv_{i+1}$ and $v$ is weakly visible to $pq$; $SP(p, v_i)$ intersects $\triangle vv_iv_{i+1}$ but $v$ is not weakly visible to $pq$.
All possible sub-cases of case (b) are considered: the sub-cases in which $v$ belongs to same region as $v_i$, vertex $v$ belonging to a different region from the region to which $v_i$ belongs to, and $v$ being located in WVP are considered. 
The sub-cases of case (c) consider $v$ being located in the occluded region $R_i$ to which $v_i$ belongs, $v$ being located in the occluded region $R_j$ to which $v_{i+1}$ belongs, and $v$ belonging to the $WVP(pq)$.
Further, the sub-cases of case (d) consider $v$ being located in the same region $R_i$ as vertices $v_i$ and $v_{i+1}$, vertex $v$ located in $WVP$, and $v$ located in a distinct occluded region from $R_i$.
\end{proof}

\begin{lemma}
\label{lem:wvpincr2}
After inserting a vertex $v$, a vertex $u_i \in WVP(pq)$ is deleted from $WVP(pq)$ if and only if it is not weakly visible from $pq$ and a (constructed) vertex is added to $WVP(pq)$ whenever it is weakly visible from $pq$.
\end{lemma}
\noindent
\begin{proof}
If a vertex $u_i \in WVP$ is not visible from $p$, then either $SP(p,u_i)$ takes a right turn at some vertex $v_j$ or $SP(q,u_i)$ takes a left turn at some vertex $v_{j'}$. 
But such vertices are removed from $WVP$ during the depth-first traversals of $SPT(p)$ and $SPT(q)$.
Similarly, if a vertex $u_i \in WVP$ is deleted from $WVP$, then either $SP(p,u_i)$ turns to the right or $SP(q,u_i)$ turns to the left.
This implies $u_i$ is not visible from $pq$.
These invariants are ensured in all the cases and sub-cases mentioned in the algorithm.
The same is applicable to constructed vertices as well: a constructed vertex $v_j$ is added in any sub-case if and only if $SP(p, v_j)$ does not make a right turn at any intermediate vertex along $SP(p, v_j)$ and $SP(q, v_j)$ does not take a left turn at any intermediate vertex along $SP(q, v_j)$.
\end{proof}

\begin{theorem}
When a new vertex $v$ is inserted to current simple polygon $P'$, the incremental algorithm to update the weak visibility polygon of an edge of $P'$ works correctly.
After preprocessing the initial simple polygon $P$ and the edge $pq$ of $P$ to build data structures of size $O(n)$ in $O(n)$ time, the weak visibility polygon of $pq$ in the current simple polygon $P'$ is updated in $O((k+1)\lg{n'})$ time whenever a vertex $v$ is inserted to $P'$. 
Here, $n$ is the number of vertices of $P$, $n'$ is the number of vertices of $P'$, and $k$ is the total number of changes required to update $SPT(p)$ and $SPT(q)$ due to the insertion of vertex $v$ to $P'$.
\end{theorem}
\begin{proof}
Correctness follows from Lemmas~\ref{lem:wvpincr1} and \ref{lem:wvpincr2}.
Computing $SPT(p), SPT(q)$ and storing them as funnels together with computing $WVP$ of $pq$ in $P$ together with building balanced binary search trees corresponding to occluded regions with respect to $WVP$ is done in $O(n)$ time. 
Identifying the appropriate case requires $O(\lg{n'})$ time.
As in the algorithm for updating the $WVP$ of a line segment, each case takes $O((k_1 + k_2) \lg{n'})$ time. 
Since $k$ equals to zero in sub-case (i) of case (d), overall time complexity for updating the weak visibility polygon is $O((k+1)\lg{n'})$.

\end{proof}

\begin{cor}
\label{corr:wvplineseg}
When a new vertex $v$ is inserted to current simple polygon $P'$, the incremental algorithm to update the weak visibility polygon of a fixed line segment $pq$ located interior to $P'$ works correctly.
After preprocessing the initial simple polygon $P$ and the fixed line segment $pq$ to build data structures of size $O(n)$ in $O(n)$ time, the weak visibility polygon of $pq$ in the current simple polygon $P'$ is updated in $O((k+1)\lg{n'})$ time whenever a vertex $v$ is inserted to $P'$.
Here, $n$ is the number of vertices of $P$, $n'$ is the number of vertices of $P'$, and $k$ is the total number of changes required to update $SPT(p)$ and $SPT(q)$ due to the insertion of vertex $v$ to $P'$.
\end{cor}

\section{Querying for the weak visibility polygon of $l$ when $l \in P'$}
\label{sect:wvpsimppolyint}

Let $s$ be a query line segment interior to the given simple polygon $P$.
Let $a$ and $b$ be the endpoints of $s$.
The weak visibility polygon $WVP(s)$ of $s$ is $\bigcup_{p \in s} VP(p)$.
The algorithm needs to capture the combinatorial representation changes of $VP(p)$ as the point $p$ moves from $a$ to $b$ along $s$.
First, we describe the notation and algorithm for computing the weak visibility polygon from \cite{journals/comgeo/ChenW15a}.
For any two vertices $v', v'' \in P$, let $b'$ be the point of intersection of ray $v''v'$ with the boundary of $P$.
If $b' \ne v'$, then the line segment $b'v'$ is termed as a {\it critical constraint} of $P$.
(Refer to Fig.~$2$ in \cite{journals/comgeo/ChenW15a}.)
Initially, point $p$ is at $a$ and $VP(p)$ is same as $VP(a)$.
As $p$ moves from $a$ to $b$ along $s$, a new vertex of $P$ could be added to the weak visibility polygon of $s$ whenever $p$ crosses a  critical constraint of $P$ \cite{journals/dcg/AronovGTZ02,journals/comgeo/BoseLM02}.
We also need the principal child definition from \cite{journals/comgeo/ChenW15a}, which is mentioned herewith. 
The {\it shortest path tree} rooted at $p$, $SPT(p)$, is the union of the shortest paths in $P$ from $p$ to all vertices of $P$.
A vertex of $P$ is in $VP(p)$ if and only if it is a child of $p$ in $SPT(p)$. 
For any child $v$ of $p$ in the tree $SPT(p)$, the {\it principal child} of $v$ is the child $w$ of $v$ in $SPT(p)$ such that the angle between rays $vw$ and $pv$ is smallest as compared with the angle between rays $vw'$ and $pv$ for any other child $w' \ne w$ of $v$.
(Refer to Fig.~$3$ in \cite{journals/comgeo/ChenW15a}.)
As $p$ moves along $s$, the next critical constraint that it encounters is characterized in the following observation from \cite{journals/dcg/AronovGTZ02}:

\begin{lemma}[from \cite{journals/dcg/AronovGTZ02}] 
\label{lem:critconstr}
The next critical constraint of a point $p$ is defined by two vertices of $P$ that are either two consecutive children of $SPT(p)$ or one, say $v$, being a child of $p$ and the other being the principal child of $v$.
\end{lemma}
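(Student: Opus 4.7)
My plan is to prove this by a case analysis based on whether the two vertices defining the critical constraint are both visible from $p$ at the instant of crossing. Let me parametrize $p(t)$ moving along $s$ from $a$ to $b$, and let $c = \overline{b'v'}$ be the critical constraint that $p$ crosses next at time $t^\star$. At $t = t^\star$, by the definition of a critical constraint (a ray from some vertex $v''$ through $v'$ extended to $b'$ on $bd(P)$), the points $p(t^\star)$, $v'$, and $v''$ are collinear along this ray, with $v'$ between $v''$ and $b'$. I would then split into Case A, where both $v'$ and $v''$ are visible from $p(t^\star)$, and Case B, where $v''$ is not visible and is separated from $p(t^\star)$ by $v'$ (so $v''$ lies on the opposite side of $v'$ from $p$ along the collinear ray).

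For Case A, I want to show that $v'$ and $v''$ must be angularly consecutive in the cyclic order of visible vertices (children) around $p(t^\star)$. Suppose to the contrary there is a child $w$ of $p$ whose angular position lies strictly between those of $v'$ and $v''$ in a neighborhood of $t^\star$. Then in a neighborhood of $t^\star$, the angular order of $v', w, v''$ around $p(t)$ is locally stable (since visibility and angular ordering change only when $p$ crosses a critical constraint). At $t^\star$ we have $v'$ and $v''$ collinear with $p$, so their angles coincide; but $w$ is strictly between them on one side just before $t^\star$ and must be so just after as well, which would force $p$ to have already been collinear with one of the pairs $(v', w)$ or $(v'', w)$ at some $t < t^\star$, contradicting that $c$ is the \emph{next} critical constraint. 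Hence $v', v''$ are consecutive children of $p$.

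For Case B, $v''$ is occluded by $v'$ just before $t^\star$; the occluding chain of $bd(P)$ passes through $v'$, so $v''$ lies along the outward continuation of $\overrightarrow{pv'}$ past $v'$. Among all vertices lying along this outward continuation, $v''$ must be the first one (the principal child of $v'$ from $p$'s viewpoint), because if some other vertex $v'''$ were closer to $v'$ along that ray, the critical constraint defined by $(v', v''')$ would be crossed at the same instant or earlier, and by an earlier-moving-line argument analogous to Case~A, the would-be event with $v'''$ would occur first. Thus in Case~B, $v'$ is a child of $p$ and $v''$ is the principal child of $v'$.

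The main obstacle is the local-constancy argument used in Case~A: justifying that angular orders and visibility stay fixed between consecutive critical-constraint crossings. I would handle this by recalling that any change in the combinatorial structure of $VP(p(t))$ — whether a vertex enters/leaves $VP$ or two visible vertices swap in angular order around $p(t)$ — corresponds exactly to $p(t)$ becoming collinear with two vertices of $P$, i.e., crossing a critical constraint. This reduces the argument to the minimality of $t^\star$ among all such collinearity events, which is precisely what it means to be the \emph{next} critical constraint.
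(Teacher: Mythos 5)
The paper does not prove this lemma at all: it is imported verbatim from Aronov, Guibas, Teichmann, and Zhang \cite{journals/dcg/AronovGTZ02} and used as a black box, so there is no in-paper argument to compare yours against. Your case split --- ($A$) $v''$ visible just before the crossing, hence an angular-coincidence event between two children of $p$; ($B$) $v''$ occluded just before, hence an ``unfolding'' event at $v'$ --- is exactly the right decomposition, and your Case~A argument is sound modulo the usual general-position caveat (you should note that the alternative to an earlier collinearity of $w$ with $v'$ or $v''$ is that all three coincide in direction at $t^\star$, a degenerate triple collinearity one excludes by perturbation), together with the observation you do make at the end, namely that any such earlier collinearity of two visible vertices with the interior point $p(t)$ is itself a critical-constraint crossing because $p(t)$ necessarily lies on the open extension segment.

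The one genuine soft spot is the selection criterion in Case~B. You argue that $v''$ is the principal child of $v'$ because it is ``the first one along the outward continuation of $\overrightarrow{pv'}$,'' but generically only $v''$ lies on that ray at $t^\star$, and ``closer along the ray'' is not the definition of principal child --- the definition is the child $w$ of $v'$ in $SPT(p)$ minimizing the angle between rays $vw$ and $pv$. The correct argument is angular, and it has a clean form you should make explicit: for every child $w$ of $v'$, the angle between the fixed ray $v'w$ and the moving ray $pv'$ differs from the direction of $pv'$ by a constant, so as $p$ traverses $s$ all these angles change at the same rate (the rotation rate of $pv'$, which is monotone since $v'$ is a fixed point off the line through $s$). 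Hence the ordering of the children of $v'$ by this angle never changes between events, and the child whose angle reaches zero first --- i.e.\ the one whose critical constraint with $v'$ is crossed first --- is precisely the one with the smallest angle, the principal child. If $v''$ were not the principal child, the principal child's constraint would have been crossed strictly earlier, contradicting minimality of $t^\star$. With that substitution (and a remark that the occluder $v'$ is automatically visible from $p$ at the crossing because the open constraint segment $(v',b')$ is interior to $P$, so $v'$ is indeed a child of $p$), your proof is complete.
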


We preprocess the same data structures as in the case of dynamic algorithms for updating visibility polygon of a point interior to the given initial simple polygon $P$ (Section~\ref{sect:maintvpsimppolyint}).
Let $P'$ be the current simple polygon.
Also, let $n'$ be the number of vertices of $P'$.
Given a line segment $s$ with endpoints $a$ and $b$, we first compute the visibility polygon when $p$ is at point $a$.
Let $r_1$ and $r_2$ be two arbitrary rays whose origin is at $a$.
The $VP(a)$ is computed in $O(k(\lg{n'})^2)$ time by invoking the $visvert$-$inopencone$ algorithm with $r_1$ as the first parameter as well as the second parameter.
For any vertex $v$ visible to $p$, as described in \cite{journals/comgeo/ChenW15a}, with one ray-rotating query, we determine the principal child of $v$ in $O((\lg{n})^2)$ time.
The critical constraints that intersect $s$ are stored in a priority queue $Q$, with the key value of a critical constraint $c$ equal to the distance of the point of intersection of $c$ and $s$ from $a$.
The extract minimum on $Q$ determines the next critical constraint that $p$ strikes.
After crossing a critical constraint, if $p$ sees an additional vertex $v'$ of $P'$, then we insert $v'$ into the appropriate position in $L_{vp}$.
For each critical constaint $c'$ that arise due to $v'$ which intersects with $s$, $c'$ is pushed into $Q$ with the distance from $a$ to the point of intersection of $c'$ and $s$ as the key.
To avoid updates to keys in $Q$, even though $p$ is moving along $ab$, every key in $Q$ represents the distance between $a$ and the point of intersection of the corresponding critical constraint and $s$.
As and when a vertex of $P'$ is determined to be weakly visible from $s$, we compute the constructed edge with a ray-shooting query from the point of intersection of the corresponding critical constraint with the $bd(P)$.
When $p$ reaches point $b$ (endpoint of $s$), $L_{vp}$ is updated so that it represents the $WVP(s)$.

\begin{lemma}
\label{lem:4t}
The time taken to query for the weak visibility polygon is $O(k(\lg{n'})^2)$ time, where $k$ is the output complexity and $n'$ is the number of vertices of current simple polygon $P'$.
\end{lemma}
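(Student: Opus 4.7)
The plan is to bound the total cost by the number of combinatorial events that occur as the sweep point $p$ moves from $a$ to $b$ along $s$, and then multiply by the per-event cost, which is dominated by ray-shooting/ray-rotating queries on the dynamic data structures of Proposition~\ref{propo:rayshoot}.

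First I would account for the initialization. Computing $VP(a)$ via $visvert$-$inopencone$ costs $O(k_a (\lg n')^2)$ where $k_a$ is the number of vertices of $VP(a)$; since every vertex of $VP(a)$ is weakly visible from $s$, we have $k_a \le k$. For every vertex reported as visible from $a$, a single additional ray-rotating query identifies its principal child in $O((\lg n')^2)$ time, so all principal children are found within $O(k_a(\lg n')^2)$. By Lemma~\ref{lem:critconstr}, each such $(v, \text{principal child})$ pair together with each consecutive-children pair contributes $O(1)$ candidate critical constraints, so $O(k_a)$ constraints are tested for intersection with $s$ and pushed into the priority queue $Q$ in total $O(k_a \lg k_a)$ time.

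Next I would handle the sweep. Each iteration of the main loop performs (i) an \texttt{extract-min} on $Q$, costing $O(\lg n')$, (ii) a possible update to $L_{vp}$ and $B_{vp}$ to insert the newly visible vertex $v'$, costing $O(\lg n')$, (iii) one ray-shooting query from $q$ (or from $v'$ along $\overrightarrow{q v'}$) to locate the constructed edge associated with $v'$, costing $O((\lg n')^2)$ by Proposition~\ref{propo:rayshoot}, and (iv) one ray-rotating query to determine the principal child of $v'$ and, by Lemma~\ref{lem:critconstr}, $O(1)$ new candidate critical constraints to insert into $Q$, each insertion costing $O(\lg n')$. Thus each event costs $O((\lg n')^2)$.

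Finally I would argue that the number of events is $O(k)$. By Lemma~\ref{lem:critconstr}, a combinatorial change of $VP(p)$ happens precisely when $p$ crosses a critical constraint; at each such crossing at most $O(1)$ new vertices are added to the running $VP(p)$, which is then merged into $WVP(s)$ via $L_{vp}$. Since each vertex of $P'$ is added to $WVP(s)$ at most once, the number of critical-constraint crossings encountered during the sweep is at most $k$, and the number of queue operations (insertions plus extractions) is therefore $O(k)$. The main obstacle to a clean accounting is bounding the total size of $Q$: here one notes that only the $O(1)$ critical constraints generated per newly visible vertex are ever pushed, so $|Q|$ remains $O(k)$ throughout. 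Summing the initialization cost with $O(k)$ events each costing $O((\lg n')^2)$ yields the claimed $O(k(\lg n')^2)$ bound.
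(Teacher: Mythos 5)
Your proposal is correct and follows essentially the same route as the paper's proof: charge $O((\lg n')^2)$ to each of the $k_1$ vertices of $VP(a)$ found during initialization and to each of the $k_2$ critical-constraint events during the sweep, and observe that $k = k_1 + k_2$. Your accounting is somewhat more explicit than the paper's (in particular, bounding the total number of priority-queue insertions by noting that only $O(1)$ critical constraints are generated per newly visible vertex), but the underlying argument is the same.
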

\begin{proof}
Due to algorithm from Section~\ref{sect:maintvpsimppolyint}, determining the $VP(a)$ in $P'$ takes $O(k_1 (\lg{n'})^2)$, where $k_1$ is the number of vertices of $VP(a)$. 
Each of these $k_1$ vertices are added to $WVP(ab)$. 
Further, for every critical constraint that is computed, at most one vertex is added to $WVP(ab)$. 
If there are $O(k_2)$ critical constraints, this part of the algorithm takes $O(k_2 (\lg{n'})^2)$ time.
Noting that $k = k_1+k_2$, the stated time complexity includes operations associated with priority queue $Q$ as well.
\end{proof}

\begin{lemma}
\label{lem:4c}
A vertex is added to WVP(pq) if and only if it is visible from at least one point on line segment $pq$.
\end{lemma}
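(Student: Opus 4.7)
The plan is to argue both directions of the biconditional by tracking exactly what triggers an insertion into $L_{vp}$ during the sweep of $p$ from $a$ to $b$ along $s$.

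For the forward direction (only-if), I would observe that insertions happen in only two places in the algorithm. First, at initialization, every vertex of $VP(a)$ is inserted; since $a$ lies on $s$ and each such vertex is visible from $a$ by definition of $VP(a)$, these vertices satisfy the condition. Second, during the sweep, a vertex $v'$ is inserted only after $p$ has crossed a critical constraint $c$ produced by $v'$ and $v'$ is detected as newly visible from $p$; at the crossing point $p^\ast \in s$ the vertex $v'$ is visible from $p^\ast$. Hence every vertex that the algorithm inserts is visible from at least one point of $s$. A symmetric remark applies to the constructed vertex computed by the ray-shooting query for each such $v'$, which lies on an edge of $P'$ visible from the same witness point on $s$.

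For the reverse direction (if), suppose some vertex $v'$ of $P'$ is visible from a point $p^\ast \in s$; I must show $v'$ is inserted by the algorithm. Consider the moment $p$ reaches $p^\ast$ during the sweep. If $v'$ is already visible from $a$, then $v'$ was inserted during the initial $visvert$-$inopencone$ call that built $VP(a)$, so we are done. Otherwise, along the portion of $s$ from $a$ to $p^\ast$, the visibility of $v'$ from the moving point $p$ changed from hidden to visible at least once. By Lemma~\ref{lem:critconstr}, any such change of combinatorial visibility occurs precisely when $p$ crosses a critical constraint that is defined by two consecutive children of $p$ in $SPT(p)$, or by a child of $p$ together with its principal child. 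So there is a first critical constraint $c^\ast$, crossed at some point $p' \in s$ with $p'$ between $a$ and $p^\ast$, at which $v'$ becomes visible.

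The main obstacle is to argue that $c^\ast$ is actually present in the priority queue $Q$ at the moment $p$ reaches $p'$. I would handle this inductively on the order in which the algorithm extracts critical constraints from $Q$. The base case is that every critical constraint determined by $VP(a)$ and by the principal children of vertices in $VP(a)$ is inserted into $Q$ during the initialization phase. For the inductive step, whenever a new vertex $w$ enters $L_{vp}$ after crossing a critical constraint, the algorithm explicitly pushes into $Q$ every critical constraint induced by $w$ (pairings of $w$ with its neighboring visible vertex and with its principal child) that intersects $s$. By Lemma~\ref{lem:critconstr} these are the only kinds of critical constraints that can cause a later visibility change, so no relevant critical constraint is ever omitted from $Q$. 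Thus when $p$ arrives at $p'$, the extract-minimum on $Q$ returns $c^\ast$, the algorithm detects $v'$ as newly visible via the appropriate ray-rotating/ray-shooting test, and inserts $v'$ into $L_{vp}$. This completes the reverse direction and the proof.
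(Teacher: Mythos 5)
Your proof is correct in substance, but it is worth knowing that the paper itself does not argue this at all: its entire proof of Lemma~\ref{lem:4c} is the single line ``Immediate from \cite{journals/comgeo/ChenW15a}'', i.e., it delegates the correctness of the sweep to Chen and Wang's weak-visibility query algorithm, of which the paper's procedure is a near-verbatim adaptation (with the static ray-shooting structures swapped for dynamic ones). What you have written is essentially an unpacking of the argument that lives inside that citation: the forward direction by inspecting the two places where insertions occur, and the reverse direction by the standard event-completeness induction over the priority queue, anchored on Lemma~\ref{lem:critconstr} (which the paper imports from \cite{journals/dcg/AronovGTZ02}). Your version buys self-containedness at the cost of some unaddressed bookkeeping: as $p$ moves along $s$, vertices also \emph{leave} $VP(p)$, so the set of ``consecutive children of $p$'' pairs changes at those events too, and the principal child of a visible vertex can itself change; for your induction to be airtight you would need to note that $Q$ is refreshed at these events as well, not only when a new vertex enters $L_{vp}$. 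These details are handled in \cite{journals/comgeo/ChenW15a}, which is precisely why the paper chose to cite rather than reprove; your sketch is a faithful and more informative substitute, modulo that one omission.
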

\begin{proof}
Immediate from \cite{journals/comgeo/ChenW15a}.
\end{proof}

\begin{theorem}
With $O(n)$ time preprocessing of the initial simple polygon $P$, data structures of size $O(n)$ are computed to facilitate vertex insertion, and vertex deletion in $O((\lg{n'})^2)$ time, and to output the weak visibility polygon of a query line segment located interior to the current simple polygon $P'$ in $O(k(\lg{n'})^2)$ time. 
Here, $k$ is the output complexity, $n$ is the number of vertices of $P$, and $n'$ is the number of vertices of $P'$. 
\end{theorem}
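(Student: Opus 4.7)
The plan is to assemble the theorem from three independent ingredients that have already been developed in the excerpt: the preprocessing/data-structure bound, the dynamic update bound per vertex insertion/deletion, and the output-sensitive query bound together with its correctness.

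First, for the preprocessing, I would invoke Proposition~\ref{propo:rayshoot} on the initial simple polygon $P$ to build, in $O(n)$ time and $O(n)$ space, the Goodrich--Tamassia fully-dynamic structure that supports point-location, ray-shooting, and shortest-distance queries in $O((\lg n)^2)$ time per query and also supports vertex/edge insertions and removals in the same bound. On top of this I would build the auxiliary circular doubly linked list $L_{P'}$ together with the balanced binary search tree $B_{P'}$ over its nodes, exactly as in Section~\ref{sect:maintvpsimppolyint}; all of this is still $O(n)$ time and $O(n)$ space. Since the ray-rotating data structure of Proposition~\ref{propo:rayrot} is in turn built on top of the ray-shooting and two-point shortest-distance structures, using the dynamic versions from \cite{journals/jal/GoodrichT97} in place of the static ones from \cite{journals/jcss/GuibasH89} gives ray-rotating queries in $O((\lg n')^2)$ worst-case time, at no extra preprocessing cost.

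Second, for the update bound, note that when a vertex $v$ is inserted between two adjacent vertices of $P'$ (or an existing vertex is deleted) the only persistent structures that need to be touched are $L_{P'}$, $B_{P'}$, and the dynamic planar subdivision structure of Proposition~\ref{propo:rayshoot}. Updating $L_{P'}$ is $O(1)$, updating $B_{P'}$ is $O(\lg n')$ by standard balanced BST operations \cite{books/dsnetworkalgo/tarjan1983}, and updating the Goodrich--Tamassia structure is $O((\lg n')^2)$ by Proposition~\ref{propo:rayshoot}, which dominates and gives the claimed $O((\lg n')^2)$ insertion/deletion time. No visibility-specific structure is maintained across updates here because the $WVP$ is computed afresh for each query line segment.

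Third, for the query of the weak visibility polygon of a line segment $s=ab$ interior to the current $P'$, I would run the algorithm described just above the theorem statement: compute $VP(a)$ in $P'$ using the algorithm of Section~\ref{sect:rayshootrayrotate} (Theorem~\ref{thm:visvertopencone}), initialize the priority queue $Q$ with the critical constraints of the children of $a$ and their principal children as in Lemma~\ref{lem:critconstr}, and then sweep $p$ from $a$ to $b$, at each critical constraint either adding a newly visible vertex to $L_{vp}$ (and its constructed edge via a single ray-shooting query) or removing a vertex that becomes hidden, while pushing into $Q$ any new critical constraints generated by the newly visible vertex that meet $s$. Correctness that the set finally stored in $L_{vp}$ equals $WVP(s)$ is exactly Lemma~\ref{lem:4c}. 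For the running time, $VP(a)$ costs $O(k_1(\lg n')^2)$ where $k_1=|VP(a)|$, and each subsequent event triggers $O(1)$ ray-rotating/ray-shooting queries plus $O(\lg k)$ priority-queue work, so by Lemma~\ref{lem:4t} the total is $O(k(\lg n')^2)$.

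The main obstacle I would expect is the bookkeeping at critical-constraint events: guaranteeing that every vertex entering or leaving visibility as $p$ slides along $s$ is detected exactly once, and that the principal-child relation used to seed $Q$ is kept consistent with additions to $VP(p)$. This is resolved by appealing to Lemma~\ref{lem:critconstr} (which characterizes the next critical constraint purely in terms of two children of $p$ in the shortest-path tree) so that each newly discovered visible vertex contributes only $O(1)$ new entries to $Q$, keeping the charging argument $O(k)$ overall. Combining this correctness with the preprocessing, update, and query bounds above yields the theorem.
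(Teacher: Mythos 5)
Your proposal follows essentially the same route as the paper: the paper's own proof is a two-line appeal to Lemma~\ref{lem:4c} for correctness and Lemma~\ref{lem:4t} for the query time, plus the observation that the preprocessing (and hence the $O((\lg{n'})^2)$ update cost) is inherited from Section~\ref{sect:maintvpsimppolyint}; you simply spell out those three ingredients in more detail. One small slip worth correcting: in the sweep of $p$ from $a$ to $b$ you speak of ``removing a vertex that becomes hidden,'' but since $WVP(s)=\bigcup_{p\in s}VP(p)$ a vertex once seen is never removed --- the paper's algorithm only ever inserts newly visible vertices into $L_{vp}$ at critical-constraint events --- and actually performing removals would compute something closer to $VP(b)$ than $WVP(s)$; since you delegate correctness to Lemma~\ref{lem:4c} this does not derail the argument, but the event handling should be stated as insertion-only.
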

\begin{proof}
Correctness follows from Lemmas~\ref{lem:4t} and \ref{lem:4c}.
The preprocessing time is same as for the visibility polygon maintenance algorithm given in Section~\ref{sect:maintvpsimppolyint}.
\end{proof}

\section{Conclusions}
\label{sect:conclu}

We have presented algorithms to dynamically maintain as well as to query for the visibility polygon of a point $q$ located in the simple polygon as that simple polygon is updated.
For any point $q$ exterior to the simple polygon, the visibility polygon of $q$ can be queried as the simple polygon is updated with vertex insertions and deletions. 
We also devised dynamic algorithms to query for the weak visibility polygon of a line segment $s$ when $s$ is located in the simple polygon. 
The query time complexity of the proposed query algorithms', and the update time complexity of the visibility polygon maintenance algorithms are output-sensitive.
To our knowledge, this is the first result to give the fully-dynamic algorithm for maintaining the visibility polygon of a fixed point located in the simple polygon.
In addition, we have devised an incremental algorithm to update the weak visibility polygon of a line segment $pq$ located interior to simple polygon, as vertices are added to that simple polygon.
Its time complexity to update the weak visibility polygon due to the insertion of a vertex $v$ to the simple polygon is expressed in terms of the sum of the number of updates required to the shortest path trees rooted at $p$ and $q$.
It would be interesting to explore devising a fully-dynamic algorithm for the weak visibility polygon maintenance whose update time complexity is expressed in terms of the number of combinatorial changes required to the weak visibility polygon being maintained.
Further, we see lots of scope for future work in devising dynamic algorithms in the context of visibility, art gallery, minimum link path, and geometric shortest path problems.

\ignore {
\vspace{-0.1in}

\subsection*{Acknowledgements}

The authors wish to acknowledge the anonymous reviewers for the valuable comments which helped in improving the quality of the paper. 
}

\bibliographystyle{plain}

\end{document}